\def\BibTeX{{\rm B\kern-.05em{\sc i\kern-.025em b}\kern-.08em
    T\kern-.1667em\lower.7ex\hbox{E}\kern-.125emX}}
\newtheorem{theorem}{\textbf{Theorem}}
\newtheorem{definition}{\textbf{Definition}}
\newenvironment{proof}{{\noindent\it\textbf{Proof.}}}{\hfill $\square$}
\newcommand{\com}[1]{\textbf{\color{red} (COMMENT: #1)}} 
\newcommand{\comg}[1]{\textbf{\color{green} (COMMENT: #1)}}
\newcommand{\response}[1]{\textbf{\color{magenta} (RESPONSE: #1)}} 
\newcommand{\com}[1]{}
\newcommand{\comg}[1]{}
\newcommand{\response}[1]{}
\begin{document}

\title{\LARGE HFEL: Joint Edge Association and Resource Allocation for Cost-Efficient Hierarchical Federated Edge Learning}

\author{\IEEEauthorblockN{Siqi~Luo,
	Xu~Chen,
	Qiong~Wu,
	Zhi~Zhou,
	and
	Shuai Yu}\\
	\textit{School of Data and Computer Science, Sun Yat-sen University, Guangzhou, China}
}

\maketitle

\begin{abstract}
	Federated Learning (FL) has been proposed as an appealing approach to handle data privacy issue of mobile devices compared to conventional machine learning at the remote cloud with raw user data uploading. By leveraging edge servers as intermediaries to perform partial model aggregation in proximity and relieve core network transmission overhead, it enables great potentials in low-latency and energy-efficient FL. Hence we introduce a novel Hierarchical Federated Edge Learning (HFEL) framework in which model aggregation is partially migrated to edge servers from the cloud. We further formulate a joint computation and communication resource allocation and edge association problem for device users under HFEL framework to achieve global cost minimization. To solve the problem, we propose an efficient resource scheduling algorithm in the HFEL framework. It can be decomposed into two subproblems: \emph{resource allocation} given a scheduled set of devices for each edge server and \emph{edge association} of device users across all the edge servers. With the optimal policy of the convex resource allocation subproblem for a set of devices under a single edge server, an efficient edge association strategy can be achieved through iterative global cost reduction adjustment process, which is shown to converge to a stable system point. Extensive performance evaluations demonstrate that our HFEL framework outperforms the proposed benchmarks in global cost saving and achieves better training performance compared to conventional federated learning.
\end{abstract}

\begin{IEEEkeywords}
	Resource scheduling, hierarchical federated edge learning, cost efficiency.
\end{IEEEkeywords}

%
\IEEEpeerreviewmaketitle

\section{Introduction}
\label{introduction}
As mobile and internet of things (IoT) devices have emerged in large numbers and are generating a massive amount of data \cite{Zhou2019Edge}, \emph{Machine Learning} (ML) has been witnessed to go through a high-speed development due to big data and improving computing capacity, which prompted the development of \emph{Artificial Intelligence} (AI) to revolutionize our life \cite{Goodfellow-et-al-2016}. The conventional ML framework focuses on central data processing, which requires widely distributed mobile devices to upload their local data to a remote cloud for global model training \cite{LI201776}. However, the cloud server is hard to exploit such a multitude of data from massive user devices as it easily suffers external attack and data leakage risk. Given the above threats to data privacy, many device users are reluctant to upload their private raw data to the cloud server \cite{book2019Bart,Privacy2014Gaff,Consumer2013}.

To tackle the data security issue in centralized training, a decentralized ML named \emph{Federated Learning} (FL) is widely envisioned as an appealing approach \cite{konen2016federated}. It enables mobile devices collaboratively build a shared model while preserving privacy sensitive data locally from external direct access. In the prevalent FL algorithm such as \emph{Federated Averaging} (FedAvg), each mobile device trains a model locally with its own dataset and then transmits the model parameters to the cloud for a global aggregation \cite{McMahan2016McMahan}. With the great potential to facilitate large-scale data collection, FL realizes model training in a distributive fashion.

Unfortunately, FL suffers from a bottleneck of communication and energy overhead before reaching a satisfactory model accuracy due to long transmission latency in wide area network (WAN) \cite{article2004Maxim}. With devices' limited computing and communication capacities, plethora of model transmission rounds occur which degrades learning performance under training time budget. And plenty of energy overhead is required for numerous computation and communication iterations which is challenging to low battery devices. In addition, as many ML models are of large size, directly communicating with the cloud over WAN by a massive number of device users could worsen the congestion in backbone network, leading to significant WAN communication latency.

To mitigate such issues, we leverage the power of \emph{Mobile Edge Computing} (MEC), which is regarded as a promising distributed computing paradigm in 5G era for supporting many emerging intelligent applications such as video streaming, smart city and augmented reality \cite{Edge2016Shi}. MEC allows delay-sensitive and computation-intensive tasks to be offloaded from distributed mobile devices to edge servers in proximity, which offers real-time response and high energy efficiency \cite{chen2015efficient,li2019edge,EnergyEfficient2017You}. Along this line, we propose a novel \emph{Hierarchical Federated Edge Learning} (HFEL) framework, in which edge servers usually fixedly deployed with base stations as intermediaries between mobile devices and the cloud, can perform edge aggregations of local models which are transmitted from devices in proximity. When each of them achieves a given learning accuracy, updated models at the edge are transmitted to the cloud for global aggregation. Intuitively, HFEL can help to reduce significant communication overhead over the WAN transmissions between device users and the cloud via edge model aggregations. Moreover, through the coordination by the edge servers in proximity, more efficient communication and computation resource allocation among device users can be achieved. It can enable effective training time and energy overhead reduction.    

Nevertheless, to realize the great benefits of HFEL, we still face the following challenges: \emph{1) how to solve a joint computation and communication resource allocation for each device to achieve training acceleration and energy saving?} The training time to converge to a predefined accuracy level is one of the most important performance metrics of FL. While energy minimization of battery-constrained devices is the main concern in MEC \cite{EnergyEfficient2017You}. Both training time and energy minimization depend on mobile devices' computation capacities and communication resource allocation from edge servers. As the resources of an edge server and its associated devices are generally limited, such optimization is non-trivial to achieve. \emph{2) How to associate a proper set of device users to an edge server for efficient edge model aggregation?} As in Fig. \ref{FL_scenario}, densely distributed mobile devices are generally able to communicate with multiple edge servers. From the perspective of an edge server, it is better to communicate with as many mobile devices as possible for edge model aggregation to improve learning accuracy. While more devices choose to communicate with the same edge server, the less communication resource that each device would get, which brings about longer communication delay. As a result, computation and communication resource allocation for the devices and their edge association issues should be carefully addressed to accomplish cost-efficient learning performance in HFEL.

As a thrust for the grand challenges above, in this paper we formulate a joint computation and communication resource allocation and edge server association problem for global learning cost minimization in HFEL. Unfortunately, such optimization problem is hard to solve. Hence we decompose the original optimization problem into two subproblems: 1) resource allocation problem and 2) edge association problem, and accordingly put forward an efficient integrated scheduling algorithm for HFEL. For resource allocation, given a set of devices which are scheduled to upload local models to the same edge server, we can solve an optimal policy, i.e., the amount of contributed computation capacity of each device and bandwidth resource that each device is allocated to from the edge server. Moreover, for edge association, we can work out a feasible set of devices (i.e., a training group) for each edge server through cost reducing iterations based on the optimal policy of resource allocation within the training group. The iterations of edge association process finally converge to a stable system point, where each edge server owns a stable set of model training devices to achieve global cost efficiency and no edge server will change its training group formation.
\begin{figure}[tp]
	\begin{center}
		\includegraphics[width=\textwidth]{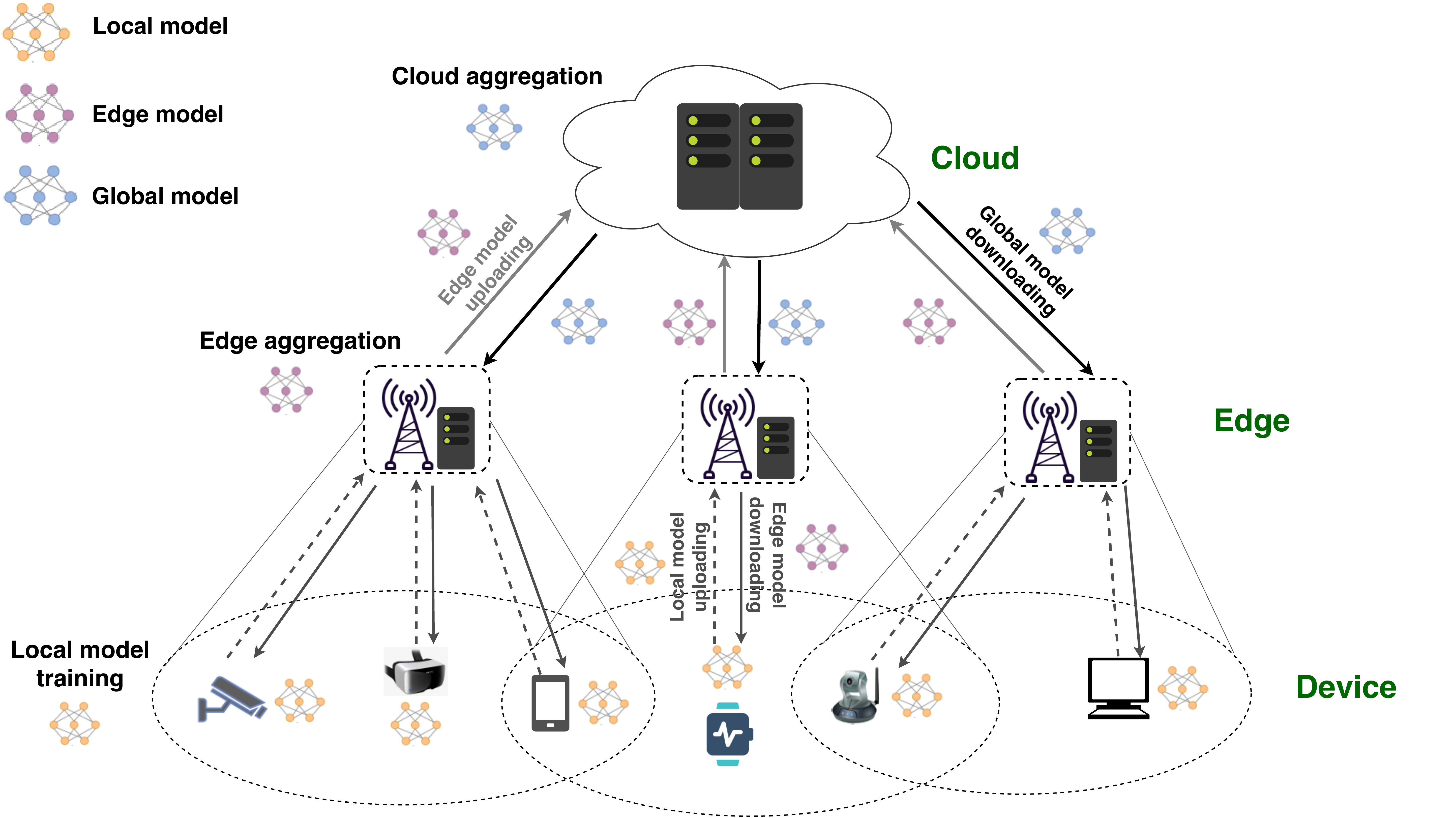}
		\caption{Hierarchical Federated Edge Learning (HFEL) framework.}\label{FL_scenario}
	\end{center}
\end{figure}

In a nutshell, our work makes the key contributions as follows:
\begin{itemize}
	\item{}
	We propose a hierarchical federated edge learning (HFEL) framework which enables great potentials in low latency and energy-efficient federated learning and formulate a holistic joint computation and communication resource allocation and edge association model for global learning cost minimization.
	\item{}
	We decompose the challenging global cost minimization problem into two subproblems: resource allocation and edge association, and accordingly devise an efficient HFEL resource scheduling algorithm. With the optimal policy of the convex resource allocation subproblem given a training group of a single edge server, a feasible edge association strategy can be solved for each edge server through cost reducing iterations which are guaranteed to converge to a stable system point.
	\item{}
	Extensive numerical experiments demonstrate that our HFEL resource scheduling algorithm is capable of achieving superior performance gain in global cost saving over comparing benchmarks and better training performance than conventional device-cloud based FL. 
\end{itemize}

\section{System Model}
\label{model}
{\centering
	\begin{table*}[tp]
		\caption{Key notations.}\label{notation}
		\scriptsize
		\centering
		\begin{tabular}{|c|m{6cm}<{\centering}|c|m{5cm}<{\centering}|} 
			\hline
			\textbf{Symbol}&\textbf{Definitions}&\textbf{Symbol}&\textbf{Definitions}\\
			\hline
			$\mathcal{N}$&set of mobile devices&$\mathcal{K}$&set of edge servers\\
			\hline
			$\mathcal{N}_i$& set of available mobile devices for edge server $i$&$D_n$&device $n$'s training data set\\
			\hline
			$\boldsymbol {x}_j$&the $j$-th input sample of a device&$y_j$&a labeled output of $\boldsymbol {x}_j$ of a device\\
			\hline
			$\theta$&local training accuracy&$\mu$&a constant related to the number of local training iterations\\
			\hline
			$L(\theta)$&number of local iterations&$t$&index of local training iteration\\
			\hline
			$\boldsymbol \omega_n^t$&training model of device $n$ at $t$-th iteration&$\eta_t$&learning rate\\
			\hline
			$c_n$&number of CPU cycles for device $n$ to process one sample data&$f_n^{min},f_n^{max}$&the minimum and maximum computation capacity of device $n$\\
			\hline
			$f_n$&CPU frequency variable of device $n$ for local training&$t_n^{cmp},e_n^{cmp}$&computation delay and energy respectively of $L(\theta)$ local iterations of device $n$\\
			\hline
			$\alpha_n$&effective capacitance coefficient of device $n$'s computing chipset&$\mathcal{S}_i$&set of devices who choose to transmit their model parameters and gradients to edge server $i$\\
			\hline
			$B_i$&edge server $i$'s total bandwidth&$\beta_{i:n}\in(0,1]$&ratio of bandwidth allocated to device $n$ from edge server $i$\\
			\hline
			$r_n$&achievable transmission rate of device $n$&$N_0$&background noise\\
			\hline
			$p_n$&transmission power of device $n$&$h_n$&channel gain of device $n$\\
			\hline
			$t_{i:n}^{com},e_{i:n}^{com}$&communication time and energy respectively for device $n$ to transmit local model to edge server $i$&$d_x(x=n,i)$&device $n$'s or edge server $i$'s update size of model parameters and gradients\\
			\hline
			$\boldsymbol \omega_i$&aggregated model by edge server $i$&$\varepsilon$&edge training accuracy\\
			\hline
			$I(\varepsilon,\theta)$&edge iteration number&$E_{\mathcal{S}_i}^{edge},T_{\mathcal{S}_i}^{edge}$&energy and delay respectively under edge server $i$ with the set of devices $\mathcal{S}_i$\\
			\hline
			$T_i^{cloud},E_i^{cloud}$&delay and energy respectively for edge model uploading by edge server $i$ to the cloud&$r_i$&edge server $i$'s transmission rate to the cloud\\
			\hline
			$p_i$&transmission power of edge server $i$ per second&$D_{\mathcal{S}_i}$&dataset under edge server $i$ with set of devices $\mathcal{S}_i$\\
			\hline
			$D$&total dataset of the set of devices $\mathcal{N}$&$\boldsymbol \omega$&global model aggregated by the cloud under one global iteration\\ 
			\hline
			$E,T$&system-wide energy and delay respectively under one global iteration&$\lambda_e,\lambda_t$&weighting parameters of energy and delay for device training requirements, respectively\\
			\hline
		\end{tabular}
\end{table*}}

In the HFEL framework, we assume a set of mobile devices $\mathcal{N}=\{n:n=1,...,N\}$, a set of edge servers $\mathcal{K}=\{i:i=1,...,K\}$ and a cloud server $S$. Let $\mathcal{N}_i\subseteq\mathcal{N}$ represent the set of available mobile devices communicated with edge server $i$. In addition, each device $n$ owns a local data set $D_n=\{(\boldsymbol{x}_j,y_j)\}_{j=1}^{|D_n|}$ where $\boldsymbol{x}_j$ denotes the $j$-th input sample and $y_j$ is the corresponding labeled output of $\boldsymbol{x}_j$ for $n$'s federated learning task. The key notations used in this paper are summarized in Table \ref{notation}.

\subsection{Learning process in HFEL}
We consider our HFEL architecture as Fig. \ref{FL_scenario}, in which one training model goes through model aggregation in edge layer and cloud layer. Therefore, the shared model parameters by mobile devices in a global iteration involve edge aggregation and cloud aggregation. To quantify training overhead in the HFEL framework, we formulate energy and delay overheads in edge aggregation and cloud aggregation within one global iteration. Note that in most FL scenarios, mobile devices participate in collaborative learning when they are in static conditions such as in the battery-charging state. Hence we assume that in the HFEL architecture, devices remain stable in the learning process, during which their geographical locations keep almost unchanged.

\subsubsection{\textbf{Edge Aggregation}} At this stage, it includes three steps: local model computation, local model transmission and edge model aggregation. That is, local models are first trained by mobile devices and then transmitted to their associated edge servers for edge aggregation, which can be elaborated as the following steps.

\textbf{Step 1. Local model computation.} At this step for a device $n$, it needs to solve the machine learning model parameter $\boldsymbol \omega$ which characterizes each output value $y_j$ with loss function $f_n(\boldsymbol{x}_j,y_j,\boldsymbol \omega)$. The loss function on the data set of device $n$ is defined as
\begin{flalign}
&F_n(\boldsymbol \omega)=\frac {1}{|D_n|}\sum_{j=1}^{|D_n|} f_n(\boldsymbol{x}_j,y_j,\boldsymbol \omega).
\end{flalign}

To achieve a local accuracy $\theta\in(0,1)$ which is common to all the devices for a same model, device $n$ needs to run a number of \emph{local iterations} formulated as $L(\theta)=\mu\log{(1/ \theta)}$ for a wide range of iterative algorithms \cite{Konecny2014Semi}. Constant $\mu$ depends on the data size and the machine learning task. At $t$-th local iteration, each device $n$'s task is to figure out its local update as
\begin{flalign}
&\boldsymbol\omega_n^t=\boldsymbol \omega_n^t-\eta\nabla F_n(\boldsymbol \omega_n^{t-1}), \label{device_local_pro}
\end{flalign}
until $||\nabla F_n(\boldsymbol \omega_n^t)||\leq \theta||\nabla F_n(\boldsymbol \omega_n^{t-1})||$ and $\eta$ is the predefined learning rate \cite{dinh2019federated}. 

Accordingly, the formulation of computation delay and energy overheads incurred by device $n$ can be given in the following. Let $c_n$ be the number of CPU cycles for device $n$ to process one sample data. Considering that each sample $(\boldsymbol{x}_j,y_j)$ has the same size, the total number of CPU cycles to run one local iteration is $c_n|D_n|$. We denote the allocated CPU frequency of device $n$ for computation by $f_n$ with $f_n\in[f_n^{min},f_n^{max}]$. Thus the total delay of $L(\theta)$ local iterations of $n$ can be formulated as
\begin{flalign}
&t_n^{cmp}=L(\theta)\frac{c_n|D_n|}{f_n},
\end{flalign}
and the energy cost of the total $L(\theta)$ local iterations incurred by device $n$ can be given as \cite{Burd1996Processor}
\begin{flalign}
&e_n^{cmp}=L(\theta)\frac {\alpha_n}{2} f_n^2c_n|D_n|,
\end{flalign}
where $\alpha_n/2$ represents the effective capacitance coefficient of device $n$'s computing chipset. 

\textbf{Step 2. Local model transmission.} After finishing $L(\theta)$ local iterations, each device $n$ will transmit its local model parameters $\boldsymbol\omega_n^t$ to a selected edge server $i$, which incurs wireless transmission delay and energy. Then for an edge server $i$, we characterize the set of devices who choose to transmit their model parameters to $i$ as $\mathcal{S}_i\subseteq\mathcal{N}_i$.

In this work, we consider an orthogonal frequency-division multiple access (OFDMA) protocol for devices in which edge server $i$ provides a total bandwidth $B_i$. Define $\beta_{i:n}$ as the bandwidth allocation ratio for device $n$ such that $i$'s resulting allocated bandwidth is $\beta_{i:n}B_i$. Let $r_n$ denote the achievable transmission rate of device $n$ which is defined as
\begin{flalign}
&r_n=\beta_{i:n}B_i\ln{(1+\frac{h_np_n}{N_0})},
\end{flalign}
where $N_0$ is the background noise, $p_n$ is the transmission power, and $h_n$ is the channel gain of device $n$ (which is referred to \cite{vu2019cellfree}). Let $t_{i:n}^{com}$ denote the communication time for device $n$ to transmit $\boldsymbol\omega_n^t$ to edge server $i$ and $d_n$ denote the data size of model parameters $\boldsymbol\omega_n^t$. Thus $t_{i:n}^{com}$ can be characterized by
\begin{flalign}
&t_{i:n}^{com}=d_n/r_n.
\end{flalign}
Given the communication time and transmission power of $n$, the energy cost of $n$ to transmit $d_n$ is
\begin{flalign}
&e_{i:n}^{com}=t_{i:n}^{com}p_n=\frac{d_np_n}{\beta_{i:n}B_i\ln{(1+\frac{h_np_n}{N_0})}}.
\end{flalign}

\textbf{Step 3. Edge model aggregation.} At this step, each edge server $i$ receives the updated model parameters from its connected devices $\mathcal{S}_i$ and then averages them as
\begin{flalign}
&\boldsymbol \omega_i=\frac {\sum_{n\in \mathcal{S}_i} |D_n|\boldsymbol \omega_n^{t}}{|D_{\mathcal{S}_i}|}\label{edge_aggregation},
\end{flalign}
where $D_{\mathcal{S}_i}=\cup_{n\in \mathcal{S}_i}D_n$ is aggregated data set under edge server $i$.

After that, edge server $i$ broadcasts $\boldsymbol \omega_i$ to its devices in $\mathcal{S}_i$ for the next round of local model computation (i.e. step 1). In other words, step 1 to step 3 of edge aggregation will iterate until edge server $i$ reaches an edge accuracy $\varepsilon$ which is the same for all the edge servers. We can observe that each edge server $i$ won't access the local data $D_n$ of each device $n$, thus preserving personal data privacy. In order to achieve the required model accuracy, for a general convex machine learning task, the number of edge iterations is shown to be \cite{Ma2015Distributed}
\begin{flalign}
&I(\varepsilon,\theta)=\frac{\delta(\log{(1/\varepsilon)})}{1-\theta},
\end{flalign}
where $\delta$ is some constant that depends on the learning task. Note that our analysis framework can also be applied when the relation between the convergence iterations and model accuracy is known in non-convex learning tasks.

Since an edge server typically has strong computing capability and stable energy supply, the edge model aggregation time and energy cost for broadcasting the aggregated model parameter $\boldsymbol \omega_i$ is not considered in our optimization model. Since the time and energy cost for a device receiving edge aggregated model parameter $\boldsymbol \omega_i$ is small compared to uploading local model parameters, and keeps almost constant during each iteration, we also ignore this part in our model. Thus, after $I(\varepsilon,\theta)$ edge iterations, the total energy cost of edge server $i$'s training group $\mathcal{S}_i$ is given by
\begin{flalign}
&E_{\mathcal{S}_i}^{edge}=\sum_{n\in \mathcal{S}_i}I(\varepsilon,\theta)(e_{i:n}^{com}+e_n^{cmp}).
\end{flalign}
Similarly, the delay including computation and communication for edge server $i$ to achieve an edge accuracy $\varepsilon$ can be derived as
\begin{flalign}\label{edge_delay_cost}
&T_{\mathcal{S}_i}^{edge}=I(\varepsilon,\theta)\max_{n\in \mathcal{S}_i}\{t_{i:n}^{com}+t_n^{cmp}\}.
\end{flalign}
From (\ref{edge_delay_cost}), we notice that the bottleneck of the computation delay is affected by the last device who finishes all the local iterations, while the communication delay bottleneck is determined by the device who spends the longest time in model transmission after local training.

\subsubsection{\textbf{Cloud Aggregation}} At this stage, we have two steps: edge model uploading and cloud model aggregation. That is, each edge server $i\in\mathcal{K}$ uploads $\boldsymbol \omega_i$ to the cloud for global aggregation after $I(\varepsilon,\theta)$ times edge aggregation.

\textbf{Step 1. Edge model uploading.} Let $r_i$ denote the edge server $i$'s transmission rate to the remote cloud for edge model uploading, $p_i$ the transmission power per sec and $d_i$ the edge server $i$'s model parameter size. We then derive the delay and energy for edge model uploading by edge server $i$ respectively as
\begin{flalign}
&T_i^{cloud}=\frac {d_i}{r_i},\\
&E_i^{cloud}=p_iT_i^{cloud}.
\end{flalign}

\textbf{Step 2. Cloud model aggregation.} At this final step, the remote cloud receives the updated models from all the edge servers and aggregates them as:
\begin{flalign}
&\boldsymbol \omega=\frac {\sum_{i\in\mathcal{K}} |D_{\mathcal{S}_i}|\boldsymbol \omega_i}{|D|},\label{cloud_pro}
\end{flalign}
where $D=\cup_{i\in\mathcal{K}} D_{\mathcal{S}_i}$.

As a result, neglecting the aggregation time at cloud which is much smaller than that on the mobile devices, we can obtain the system-wide energy and delay under one global iteration as
\begin{flalign}
&E=\sum_{i\in\mathcal{K}} (E_i^{cloud}+E_{\mathcal{S}_i}^{edge}),\\
&T=\max_{i\in\mathcal{K}} \{T_i^{cloud}+T_{\mathcal{S}_i}^{edge}\}\label{cloud_delay_cost}.
\end{flalign}
For a more clear description, we provide one global aggregation iteration procedure of HFEL in Algorithm \ref{HFEL_alg}. Such global aggregation procedure can be repeated by pushing the global model parameter $\boldsymbol \omega$ to all the devices via the edge servers, until the stopping condition (e.g., the model accuracy or total training time) is satisfied.

\begin{algorithm}[t]
	\caption{HFEL under one global iteration}\label{HFEL_alg}
	\begin{algorithmic}[1]
		\Require
		Initial models of all the devices $\{\boldsymbol\omega_{n\in\mathcal{N}}^0\}$ with local iteration $t=0$, local accuracy $\theta$,  edge accuracy $\varepsilon$;
		\Ensure
		Global model $\boldsymbol \omega$;
		\\
		\State \textbf{Edge aggregation:}
		\For{$t=1,2,...,I(\varepsilon,\theta)L(\theta)$}
			\For{each device $n=1,...,N$ in parallel}
			\State $n$ solves local problem (\ref{device_local_pro}) and derives $\boldsymbol\omega_n^t$. (\textbf{Local model computation})
			\EndFor
			\State All the devices transmit their updated $\boldsymbol\omega_n^t$ to edge server $i$. (\textbf{Local model transmission})
			\State
			\If{$t~\%~L(\theta)=0$}
				\For{each edge server $i=1,...,K$ in parallel}
				\State $i$ calculates (\ref{edge_aggregation}) after receiving $\{\boldsymbol\omega_n^t:n\in\mathcal{S}_i\}$, and obtains $\boldsymbol \omega_i$. (\textbf{Edge model aggregation})
				\State $i$ broadcasts $\boldsymbol \omega_i$ to $\mathcal{S}_i$ such that $\boldsymbol \omega_n^t=\boldsymbol \omega_i,\forall n \in\mathcal{S}_i$.
				\EndFor
			\EndIf
			
		\EndFor
		\\
		\State \textbf{Cloud aggregation:}
		\State After receiving $\{\boldsymbol \omega_{i\in\mathcal{K}}\}$, the cloud solves problem (\ref{cloud_pro}) and derives the global model $\boldsymbol \omega$.
		
	\end{algorithmic}
\end{algorithm}

\subsection{Problem Formulation}
Given the system model above, we now consider the system-wide performance optimization with respect to energy and delay minimization within one global iteration. Let $\lambda_e,\lambda_t\in[0,1]$ represent the importance weighting indicators of energy and delay for the training objectives, respectively. Then the HEFL optimization problem is formulated as follows:
\begin{flalign}
&min\ \lambda_e E+\lambda_t T,\label{system_minimization}\\
&subject~to,\nonumber\\
&\sum_{n\in \mathcal{S}_i} \beta_{i:n}\leq 1,\forall i\in \mathcal{K},\tag{\ref{system_minimization}a}\\
&0< \beta_{i:n} \le 1,\forall n\in\mathcal{S}_i,\forall i\in\mathcal{K},\tag{\ref{system_minimization}b}\\
&f_n^{min}\leq f_n\leq f_n^{max},\forall n\in\mathcal{N},\tag{\ref{system_minimization}c}\\
&\mathcal{S}_i\subseteq\mathcal{N}_i,\forall i\in \mathcal{K},\tag{\ref{system_minimization}d}\\
&\cup_{i\in \mathcal{K}}\mathcal{S}_i=\mathcal{N},\tag{\ref{system_minimization}e}\\
&\mathcal{S}_i \cap \mathcal{S}_k=\varnothing,\forall i,k\in\mathcal{K}\ and\ i\ne k,\tag{\ref{system_minimization}f}
\end{flalign}
where (\ref{system_minimization}a) and (\ref{system_minimization}c) respectively represent the uplink communication resource constraints and computation capacity constraints, (\ref{system_minimization}d) and (\ref{system_minimization}e) ensure all the devices in the system participate in the model training, and (\ref{system_minimization}f) requires that each device is allowed to associate with one edge server for model parameter uploading and aggregation for sake of cost saving. 

Unfortunately, this optimization problem is hard to solve due to the large combinatorial search space of the edge association decision constraints (\ref{system_minimization}d)-(\ref{system_minimization}f) and their coupling with computation and communication resource allocation in the objective function. This implies that for large inputs it is impractical to obtain the global optimal solution in a real-time manner. Thus, efficient approximating algorithm with low-complexity is highly desirable and this motivates the HFEL scheduling algorithm design in the following.

\subsection{Overview of HFEL Scheduling Scheme}
\begin{figure}[tp]
	\begin{center}
		\includegraphics[width=0.5\textwidth]{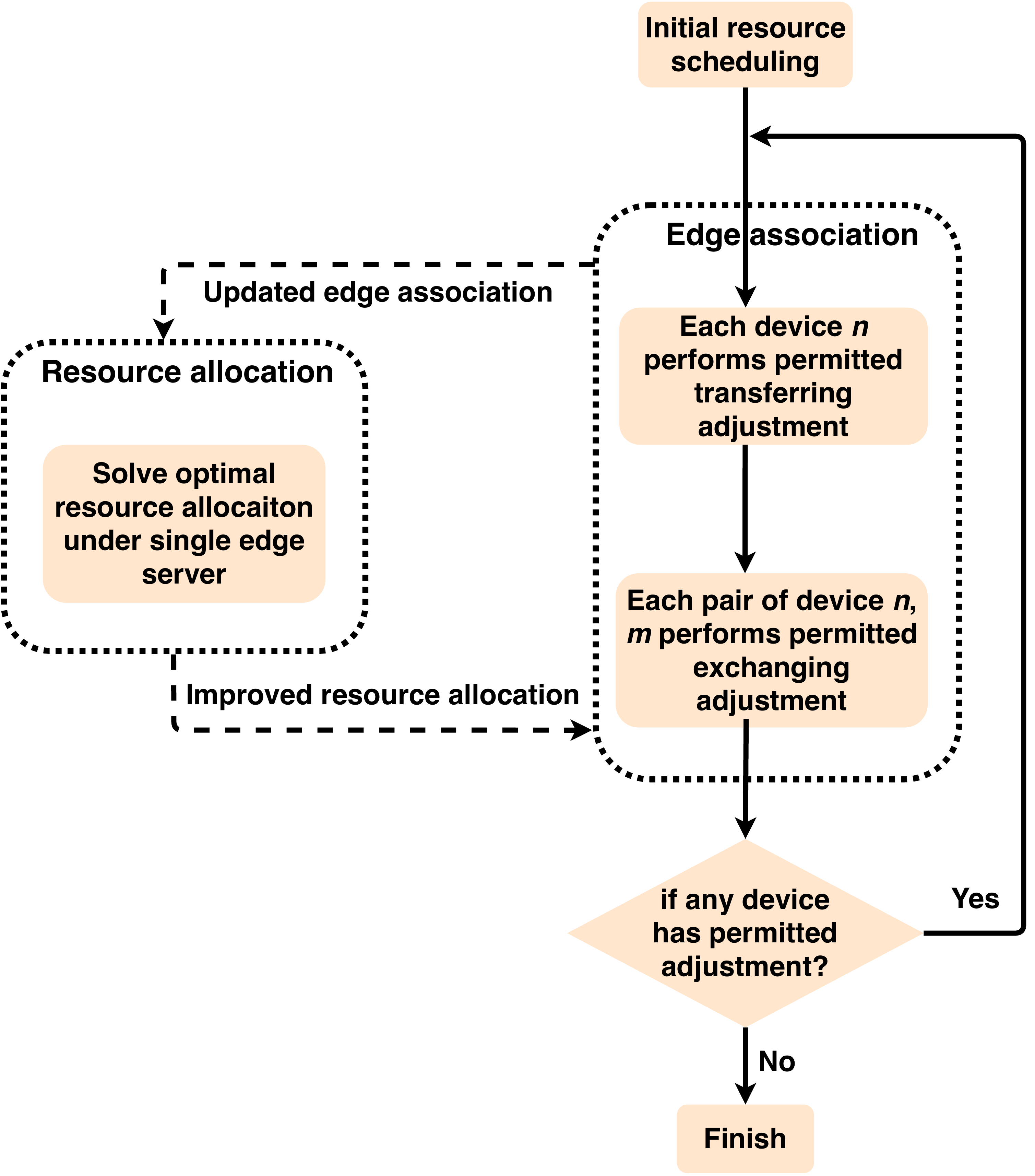}
		\caption{Basic procedure of HFEL scheduling policy.}\label{overview}
	\end{center}
\end{figure}

Since the optimization problem (\ref{system_minimization}) is hard to solve directly, a common and intuitive solution is to design a feasible and computation efficient approach to approximately minimize the system cost. Here we adopt the divide-and-conquer principle and decompose the HFEL scheduling algorithm design issue into two key subproblems: resource allocation within a single edge server and edge association across multiple edge servers.

As shown in Fig. \ref{overview}, the basic procedures of our scheme are elaborated as follows:
\begin{itemize}
	\item{}
	We first carry out an initial edge association strategy (e.g., each device connects to its closest edge server). Given the initial edge association strategy, we then solve the optimal resource allocation for the devices within each edge sever (which is given in Section \ref{single_server} later on).
	\item{}
	Then we define that for each device, it has two possible adjustments to perform to improve edge association scheme: transferring or exchanging (which will be formally defined in Section \ref{multiple_servers} later on). These adjustments are permitted to carry out if they can improve the system-wide performance without damaging any edge server's utility.
	\item{}
	When a device performs a permitted adjustment, it incurs a change of systematic edge association strategy. Thus we will work out the optimal resource allocation for each edge server with updated edge association.
	\item{}
	All the devices iteratively perform possible adjustments until there exists no permitted adjustment, i.e., no change of systematic edge association strategy.
\end{itemize}

As shown in the following sections, the resource allocation subproblem can be efficiently solved in practice using convex optimization solvers, and the edge association process can converge to a stable point within a limited number of iterations. Hence the resource scheduling algorithm for HFEL can converge in a fast manner and is amendable for practical implementation.

\section{Optimal Resource Allocation Within Single Edge Server}
\label{single_server}
In this section, we concentrate on the optimal overhead minimization within a single edge server, i.e., considering joint computation and communication resource allocation subproblem under edge server $i$ given scheduled training group of devices $\mathcal{S}_i$.

To simplify the notations, we first introduce the following terms:
\begin{flalign}
A_n=&\frac{\lambda_eI(\varepsilon,\theta)d_np_n}{B_i\ln{(1+\frac{h_np_n}{N_0})}}, \nonumber\\
B_n=&\lambda_eI(\varepsilon,\theta)L(\theta)\frac {\alpha_n}{2}c_n|D_n|, \nonumber\\
W=&\lambda_tI(\varepsilon,\theta), \nonumber\\
D_n=&\frac{d_n}{B_i\ln{(1+\frac{h_np_n}{N_0})}},\nonumber\\
E_n=&L(\theta)c_n|D_n|,\nonumber
\end{flalign}
where $A_n,B_n,D_n,E_n$ and $W$ are constants related to device $n$'s parameters and system setting. Then through refining and simplifying the aforementioned formulation (\ref{system_minimization}) in a single edge server scenario, we can derive a subproblem formulation of edge server $i$'s overhead minimization under one global iteration as follows:
\begin{flalign}
min\ C_i=&\lambda_eE_{\mathcal{S}_i}^{edge}(f_n,\beta_{i:n})+\lambda_tT_{\mathcal{S}_i}^{edge}(f_n,\beta_{i:n}) \label{single_server_minimization}\\
=&\sum_{n\in\mathcal{S}_i}(\frac{A_n}{\beta_{i:n}}+B_nf^2_n)+W\max_{n\in \mathcal{S}_i}\{\frac{D_n}{\beta_{i:n}}+\frac{E_n}{f_n}\},\nonumber\\
&subject~to,\nonumber\\
&0< \sum_{n\in \mathcal{S}_i} \beta_{i:n}\leq 1,\tag{\ref{single_server_minimization}a}\\
&f_n^{min}\leq f_n\leq f_n^{max},\forall n\in\mathcal{S}_i,\tag{\ref{single_server_minimization}b}\\
&0< \beta_{i:n} \le 1,\forall n\in\mathcal{S}_i.\tag{\ref{single_server_minimization}c}
\end{flalign}

For the optimization problem (\ref{single_server_minimization}), we can show it is a convex optimization problem as stated in the following. 

\begin{theorem}\label{theorem_convexity}
The resource allocation subproblem (\ref{single_server_minimization}) is convex.
\end{theorem}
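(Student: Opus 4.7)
The plan is to verify the two standard ingredients for convex optimization: (i) the feasible set is convex, and (ii) the objective $C_i$ is a convex function on that set. The feasible set is trivial, since constraints (\ref{single_server_minimization}a)--(\ref{single_server_minimization}c) are all linear (affine) inequalities in the decision variables $(f_n, \beta_{i:n})_{n\in\mathcal{S}_i}$, so their intersection is a polyhedron and hence convex. The real work is with the objective.

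I would decompose $C_i$ into three groups of terms and argue convexity of each. First, for every $n\in\mathcal{S}_i$, the term $A_n/\beta_{i:n}$ is convex in $\beta_{i:n}$ on the open half-line $\beta_{i:n}>0$ because $A_n>0$ and $1/x$ is a standard convex function on $(0,\infty)$ (its second derivative $2/x^3$ is positive). Second, $B_n f_n^2$ is convex in $f_n$ since $B_n>0$ (a nonnegative quadratic). Each of these functions depends on a single coordinate, so when lifted to the whole variable space they remain convex, and summing them preserves convexity.

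The main obstacle, and the only nontrivial step, is the $\max$ term $W\max_{n\in\mathcal{S}_i}\{D_n/\beta_{i:n}+E_n/f_n\}$. I would handle it in two stages. First, fix $n$ and argue that $g_n(f_n,\beta_{i:n}) := D_n/\beta_{i:n}+E_n/f_n$ is jointly convex on $(0,\infty)\times(0,\infty)$: since $D_n/\beta_{i:n}$ is convex in $\beta_{i:n}$ and constant in $f_n$, and symmetrically for $E_n/f_n$, their Hessian is a diagonal matrix with nonnegative entries $2D_n/\beta_{i:n}^3$ and $2E_n/f_n^3$, hence positive semidefinite. Lifting each $g_n$ to the full variable space (it depends only on the two coordinates corresponding to device $n$) preserves convexity. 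Second, invoke the standard fact that the pointwise maximum of a finite collection of convex functions is convex; multiplying by $W\ge 0$ preserves convexity.

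Finally, I would conclude by noting that $C_i$ is a nonnegative sum of convex functions, hence convex, and minimizing a convex objective over a convex set is by definition a convex program, establishing Theorem \ref{theorem_convexity}. The only subtlety worth flagging explicitly in the write-up is the positivity of $\beta_{i:n}$ and $f_n$ on the feasible region (guaranteed by (\ref{single_server_minimization}b) and (\ref{single_server_minimization}c)), which is needed so that the reciprocal terms are well-defined and their convexity arguments apply.
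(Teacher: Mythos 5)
Your proposal is correct and follows essentially the same route as the paper's own proof: the paper also splits $C_i$ into the three parts $A_n/\beta_{i:n}$, $B_n f_n^2$, and the $\max$ term, declares each convex on its domain, and notes the constraints are affine. You simply supply the details (second derivatives, joint convexity of each $D_n/\beta_{i:n}+E_n/f_n$, pointwise maximum of convex functions, and the positivity of $\beta_{i:n}$ and $f_n$) that the paper leaves as ``intuitively convex.''
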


\begin{proof}
The subformulas of $C_i$ consist of three parts: 1) $\frac{A_n}{\beta_{i:n}}$, 2) $B_nf^2_n$ and 3) $\max_{n\in \mathcal{S}_i}\{\frac{D_n}{\beta_{i:n}}+\frac{E_n}{f_n}\}$, each of which is intuitively convex in its domain and all constraints get affine such that problem (\ref{single_server_minimization}) is convex.
\end{proof}

By exploiting the Karush-Kuhn-Tucker (KKT) conditions of problem (\ref{single_server_minimization}), we can obtain the following structural result.

\begin{theorem}\label{theorem_BW_optimal}
	The optimal solutions to device $n$'s bandwidth and computation capacity allocations $\beta_{i:n}^*$ and $f_n^*$ under edge server $i$ of (\ref{single_server_minimization}) satisfy
	\begin{flalign}
		&\beta_{i:n}^*=\frac{(A_n+\frac{2B_nf_n^{*3}}{E_n}D_n)^{\frac{1}{3}}}{\sum_{n\in \mathcal{S}_i}(A_n+\frac{2B_nf_n^{*3}}{E_n}D_n)^{\frac{1}{3}}}.\label{BW_optimal_expr}
	\end{flalign}
\end{theorem}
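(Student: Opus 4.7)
The plan is to prove (\ref{BW_optimal_expr}) via an epigraph reformulation followed by a KKT analysis. By Theorem~\ref{theorem_convexity} the problem is convex, and Slater's condition is easily verified (e.g., uniform allocation $\beta_{i:n} = 1/(2|\mathcal{S}_i|)$ together with any interior $f_n$), so the KKT conditions are both necessary and sufficient for optimality.

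The first step is to smooth out the $\max$ term by introducing an epigraph variable $T$, rewriting (\ref{single_server_minimization}) as: minimize $\sum_{n\in\mathcal{S}_i}(A_n/\beta_{i:n} + B_n f_n^2) + WT$ subject to the per-device constraints $D_n/\beta_{i:n} + E_n/f_n \le T$ for every $n\in\mathcal{S}_i$, the bandwidth budget $\sum_n \beta_{i:n}\le 1$, and the box constraints (\ref{single_server_minimization}b)--(\ref{single_server_minimization}c). Introducing multipliers $\mu_n \ge 0$ for the epigraph constraints and $\nu \ge 0$ for the budget, and assuming for the moment that (\ref{single_server_minimization}b) and (\ref{single_server_minimization}c) are inactive at the optimum, the stationarity conditions of the Lagrangian decouple nicely across the three variable blocks: $\partial L/\partial T$ gives $\sum_n \mu_n = W$; $\partial L/\partial f_n$ yields $\mu_n = 2B_n f_n^{*3}/E_n$, pinning the epigraph multiplier to the optimal CPU frequency; and $\partial L/\partial \beta_{i:n}$ yields a relation linking $\beta_{i:n}^*$ to the aggregated weight $A_n + \mu_n D_n$ and to the scalar $\nu$.

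The key algebraic step is to substitute the $f_n$-stationarity expression for $\mu_n$ into the $\beta$-stationarity equation, which eliminates the multiplier and expresses $\beta_{i:n}^*$ purely in terms of device-$n$ constants, $f_n^*$, and the single scalar $\nu$. To determine $\nu$, I would argue that the bandwidth budget must be tight: since the objective is strictly decreasing in each $\beta_{i:n}$, any slack in the budget could be reassigned to strictly reduce cost, so $\sum_{n\in\mathcal{S}_i}\beta_{i:n}^* = 1$. Summing the per-device expression and solving for $\nu$ collapses the problem to a single scalar equation, and back-substitution then yields the normalized ratio displayed in (\ref{BW_optimal_expr}).

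The main obstacle I foresee is justifying that the box constraints (\ref{single_server_minimization}b)--(\ref{single_server_minimization}c) are inactive at the optimum, so that the interior KKT system I wrote is the right one to solve. If an unconstrained KKT solution would violate some $f_n^{\min}\le f_n \le f_n^{\max}$ or $\beta_{i:n}\le 1$, the clean closed form breaks down and one has to clip the offending variable to its boundary and re-solve on the reduced active-set problem; a careful statement would therefore restrict the result to parameter regimes where the interior optimum is admissible. A secondary, minor care point is establishing $\nu>0$ so that the normalization is well defined, which follows from the tightness argument above together with strict monotonicity of the objective in each $\beta_{i:n}$.
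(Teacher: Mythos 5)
Your proposal follows essentially the same route as the paper's proof: an epigraph reformulation of the $\max$ term, KKT stationarity yielding $\tau_n=2B_nf_n^{*3}/E_n$ for the epigraph multipliers and $\sum_n\tau_n=W$, elimination of the multiplier from the $\beta$-stationarity condition, and normalization via the tight bandwidth budget $\sum_{n\in\mathcal{S}_i}\beta_{i:n}^*=1$. The only difference is that you are more explicit than the paper about verifying Slater's condition and about the assumption that the box constraints (\ref{single_server_minimization}b)--(\ref{single_server_minimization}c) are inactive at the optimum, a caveat the paper makes silently.
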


\begin{proof}
First to make (\ref{single_server_minimization}) better tractable, let $t=\max_{n\in \mathcal{S}_i}\{\frac{D_n}{\beta_{i:n}}+\frac{E_n}{f_n}\}$ and $t\geq \frac{D_n}{\beta_{i:n}}+\frac{E_n}{f_n},\forall n\in \mathcal{S}_i$. Then problem ($\ref{single_server_minimization}$) can be further transformed to 
\begin{flalign}
min\ C_i&=\sum_{n\in\mathcal{S}_i}(\frac{A_n}{\beta_{i:n}}+B_nf^2_n)+Wt,\label{further_single_server_minimization}\\
&subject~to,\nonumber\\
&\sum_{n\in \mathcal{S}_i} \beta_{i:n}\leq 1,\tag{\ref{further_single_server_minimization}a}\\
&f_n^{min}\leq f_n\leq f_n^{max},\forall n\in\mathcal{S}_i,\tag{\ref{further_single_server_minimization}b}\\
&0< \beta_{i:n} \le 1,\forall n\in\mathcal{S}_i,\tag{\ref{further_single_server_minimization}c}\\
&\frac{D_n}{\beta_{i:n}}+\frac{E_n}{f_n}\leq t,\forall n\in \mathcal{S}_i.\tag{\ref{further_single_server_minimization}d}
\end{flalign}
\begin{algorithm}[t]
	\caption{Resource Allocation Algorithm}\label{resource_allocation}
	\begin{algorithmic}[1]
		\Require
		Initial $\{f_n:n\in\mathcal{S}_i\}$ by random setting;
		\Ensure
		Optimal resource allocation policy under edge server $i$ as $\{\beta_{i:n}^*:n\in\mathcal{S}_i\}$ and $\{f_n^*:n\in\mathcal{S}_i\}$.
		
		\State \quad Replace $\beta_{i:n}$ with equation (\ref{BW_optimal_expr}) in problem (\ref{single_server_minimization}), which then is transformed to an equivalent convex optimization problem (\ref{convex_cpu}) with respective to variables $\{f_n:n\in\mathcal{S}_i\}$.
		\State \quad Utilize convex optimization solvers (e.g., CVX and IPOPT) to solve (\ref{convex_cpu}) and obtain optimal computation capacity allocation $\{f_n^*:n\in\mathcal{S}_i\}$.
		\State \quad Given $\{f_n^*:n\in\mathcal{S}_i\}$, optimal bandwidth allocation $\{\beta_{i:n}^*:n\in\mathcal{S}_i\}$ can be derived based on (\ref{BW_optimal_expr}).
	\end{algorithmic}
\end{algorithm}
Given $\mathcal{S}_i,\forall i\in\mathcal{K}$, problem (\ref{further_single_server_minimization}) is convex such that it can be solved by the Lagrange multiplier method. The partial Lagrange formula can be expressed as
\begin{flalign}
L_i&=\sum_{n\in\mathcal{S}_i}(\frac{A_n}{\beta_{i:n}}+B_nf^2_n)+Wt+\phi(\sum_{n\in \mathcal{S}_i} \beta_{i:n}-1)+
\sum_{n\in\mathcal{S}_i}\tau_n(\frac{D_n}{\beta_{i:n}}+\frac{E_n}{f_n}-t),\nonumber
\end{flalign}
where $\phi$ and $\tau_n$ are the Lagrange multipliers related to constraints (\ref{further_single_server_minimization}a) and (\ref{further_single_server_minimization}d). Applying KKT conditions, we can derive the necessary and sufficient conditions in the following.
\begin{flalign}
&\frac{\partial L}{\partial \beta_{i:n}}=\phi\beta_{i:n}-\frac{A_n+\tau_nD_n}{\beta_{i:n}^2}=0,\forall n\in\mathcal{S}_i,\label{KKT_BW}\\
&\frac{\partial L}{\partial f_n}=2B_nf_n-\frac{\tau_nE_n}{f_n^2}=0,\forall n\in\mathcal{S}_i,\label{KKT_cpu}\\
&\frac{\partial L}{\partial t}=W-\sum_{n\in \mathcal{S}_i}\tau_n=0,\forall n\in\mathcal{S}_i,\\
&\phi(\sum_{n\in \mathcal{S}_i} \beta_{i:n}-1)=0,\phi\geq 0,\label{KKT_BW_sum}\\
&\tau_n(\frac{D_n}{\beta_{i:n}}+\frac{E_n}{f_n}-t)=0,\tau_n\geq 0,\forall n\in \mathcal{S}_i,\\
&f_n^{min}\leq f_n\leq f_n^{max},0< \beta_{i:n} \le 1,\forall n\in\mathcal{S}_i.
\end{flalign}
From (\ref{KKT_BW}) and (\ref{KKT_cpu}), we can derive the relations below:
\begin{flalign}
&\phi=\frac{A_n+\tau_nD_n}{\beta_{i:n}^3}>0,\\
&\beta_{i:n}=({\frac{A_n+\tau_nD_n}{\phi}})^{\frac{1}{3}},\\
&\tau_n=\frac{2B_nf_n^3}{En},\label{tau_expr}
\end{flalign}
based on which, another relation expression can be obtained combining (\ref{KKT_BW_sum}) as follows.
\begin{flalign}
&\sum_{n\in\mathcal{S}_i} {(A_n+\tau_nD_n)}^{\frac{1}{3}}=\phi^{\frac{1}{3}}=\frac{(A_n+\tau_nD_n)^{\frac{1}{3}}}{\beta_{i:n}}.
\end{flalign}
Hence, we can easily work out 
\begin{flalign}
&\beta_{i:n}=\frac{(A_n+\tau_nD_n)^{\frac{1}{3}}}{\sum_{n\in\mathcal{S}_i} {(A_n+\tau_nD_n)}^{\frac{1}{3}}}.\label{beta_expr}
\end{flalign}
Finally, replacing $\tau_n$ with (\ref{tau_expr}) in expression (\ref{beta_expr}), the optimal bandwidth ratio $\beta_{i:n}^*$ can be easily figured out as (\ref{BW_optimal_expr}).
\end{proof}

Given the conclusions in Theorem \ref{theorem_convexity} and \ref{theorem_BW_optimal}, we are able to efficiently solve the resource allocation problem (\ref{single_server_minimization}) with Algorithm \ref{resource_allocation}. Likewise, by replacing $\beta_{i:n}$ with (\ref{BW_optimal_expr}), we can transform problem (\ref{single_server_minimization}) to an equivalent convex optimization problem as 
\begin{flalign}
min\ &\sum_{n\in\mathcal{S}_i}(\frac{A_n\sum_{n\in\mathcal{S}_i} {(A_n+\frac{2B_nf_n^{3}}{E_n}D_n)}^{\frac{1}{3}}}{(A_n+\frac{2B_nf_n^{3}}{E_n}D_n)^{\frac{1}{3}}}+B_nf^2_n)+\nonumber\\
&W\max_{n\in \mathcal{S}_i}\{\frac{D_n\sum_{n\in\mathcal{S}_i} {(A_n+\frac{2B_nf_n^{3}}{E_n}D_n)}^{\frac{1}{3}}}{(A_n+\frac{2B_nf_n^{3}}{E_n}D_n)^{\frac{1}{3}}}+\frac{E_n}{f_n}\},\label{convex_cpu}\\
&subject~to,\nonumber\\
&f_n^{min}\leq f_n\leq f_n^{max},\forall n\in\mathcal{S}_i.
\end{flalign} 

Since the original problem (\ref{single_server_minimization}) is convex and $\beta_{i:n}$ is convex with respect to $f_n$, the transformed problem (\ref{convex_cpu}) above is also convex, which can be solved by some convex optimization solvers (e.g., CVX and IPOPT) to obtain optimal solution $\{f_n^*:n\in\mathcal{S}_i\}$. After that, optimal solution $\{\beta_{i:n}^*:n\in\mathcal{S}_i\}$ can be derived based on (\ref{BW_optimal_expr}) given $\{f_n^*:n\in\mathcal{S}_i\}$. Note that by such problem transformation, we can greatly reduce the size of decision variables in the original problem (\ref{single_server_minimization}) which can help to significantly reduce the solution computing time in practice.

\section{Edge Association For multiple edge servers}
\label{multiple_servers}

We then consider the edge association subproblem for multiple edge servers. Given the optimal resource allocation of scheduled devices under a single edge server, the key idea of solving systematic overhead minimization is to efficiently allocate a bunch of devices to each edge server for edge model aggregation. In the following, we will design an efficient edge association for all the edge servers, in order to iteratively improve the overall system performance.

First we introduce some critical concepts and definitions about edge association by each edge server in the following.

\begin{definition}
	In our system, a \textbf{local training group} $\mathcal{S}_i$ is termed as a subset of $\mathcal{N}_i$, in which devices choose to upload their local models to edge server $i$ for edge aggregation. Correspondingly, the utility of $\mathcal{S}_i$ can be derived as $v(\mathcal{S}_i)=- C_i(\boldsymbol f^*,\boldsymbol \beta_i^*)$ which takes a minus sign over the minimum cost of solving resource allocation subproblem for edge server $i$.
\end{definition}

\begin{definition}
	An \textbf{edge association strategy} $DS=\{\mathcal{S}_i:i\in \mathcal{K}\}$ is defined as the set of local training groups of all the edge servers, where $\mathcal{S}_i=\{n:n\in\mathcal{N}_i\}$, such that the system-wide utility given scheduled $DS$ can be denoted as $v(DS)=\sum_{i=1}^{K} v(\mathcal{S}_i)$.
\end{definition}

For the whole system, which kind of edge association strategy it prefers depends on $v(DS)$. To compare different edge association strategies, we define a preference order based on $v(DS)$ which reflects preferences of all the edge servers for different local training group formations.

\begin{definition}
	Given two different edge association strategies $DS^1$ and $DS^2$, we define a \textbf{preference order} as $DS^1\vartriangleright DS^2$ if and only if $v(DS^1)>v(DS^2)$. It indicates that edge association strategy $DS^1$ is preferred over $DS^2$ to gain lower overhead by all the edge servers.
\end{definition}

Next, we can solve the overhead minimization problem by constantly adjusting edge association strategy $DS$, i.e., each edge server's training group formation, to gain lower overhead in accordance with preference order $\vartriangleright$. The edge association adjusting will result in termination with a stable $DS^*$ where no edge server $i$ in the system will deviate its local training group from $\mathcal{S}^*_i\in DS^*$.

Obviously the adjustment of edge association strategy $DS$ basically results from the change of each edge server's local training group formation. In our system, it is permitted to perform some edge association adjustments with utility improvement based on $\vartriangleright$ defined as follows. 

\begin{definition}
	A \textbf{device transferring adjustment} by $n$ means that device $n\in\mathcal{S}_i$ with $|\mathcal{S}_i|>2$ retreats its current training group $\mathcal{S}_i$ and joins another training group $\mathcal{S}_{-i}$. Causing a change from $DS^1$ to $DS^2$, the device transferring adjustment is permitted if and only if $DS^2\vartriangleright DS^1$.
\end{definition}

\begin{definition}
	A \textbf{device exchanging adjustment} between edge servers $i$ and $j$ means that device $n\in\mathcal{S}_i$ and $m\in\mathcal{S}_j$ are switched to each other's local training group. Causing a change from $DS^1$ to $DS^2$, the device exchanging adjustment is permitted if and only if $DS^2\vartriangleright DS^1$.
\end{definition}

Based on the wireless communication between devices and edge servers, each device reports all its detailed information (including computing and communication parameters) to its available edge servers. Then each edge server $i$ will calculate its own utility $v(\mathcal{S}_i)$, communicate with the other edge servers through cellular links and manage the edge association adjustments.

With the iteration of every permitted adjustment which brings a systematic overhead decrease by $\Delta=v(DS^2)-v(DS^1)$, the edge association adjustment process will terminate to be stable where no edge server will deviate from the current edge association strategy.

\begin{definition}
	\label{nash}
	An edge association strategy $DS^*$ is at a \textbf{stable system point} if no edge server $i$ will change $\mathcal{S}_i^*\in DS^*$ to obtain lower global training overhead with $\mathcal{S}_{-i}^*\in DS^*$ unchanged.
\end{definition}

That is, at a stable system point $DS^*$, no edge server $i$ will deviate its local training group formation from $\mathcal{S}_i^*\in DS^*$ to achieve lower global FL overhead given optimal resource allocation within $\mathcal{S}_i^*$.

\begin{algorithm}[t]
	\caption{Edge Association Algorithm}\label{device_schedule}
	\begin{algorithmic}[1]
		\Require
		Set of devices $\mathcal{N}$, tasks $\mathcal{T}$ and sensing data $\mathcal{K}$;
		\Ensure
		Stable system point $DS^*$.
		\For{$i=1$ to $K$}
		\State edge server $i$ randomly forms $\mathcal{S}_i$.
		\State $i$ solves optimal resource allocation and derives an initial $v(\mathcal{S}_i)$ within $\mathcal{S}_i$.
		
		\EndFor
		\State An initial edge association strategy is obtained as $DS$.
		\State
		
		\Repeat
		\For {$n=1$ to $N$}
		\State each pair of edge server $i$ and $j$ with $i\ne j$ perform \textbf{device transferring adjustment} by transferring device $n$ ($n\in\mathcal{S}_i$ and $n\in\mathcal{N}_{j}$) from $\mathcal{S}_i$ to $\mathcal{S}_j$ if permitted. Then $\boldsymbol h_i$ and $\boldsymbol h_j$ are accordingly updated.
		\EndFor
		\State randomly pick device $n\in\mathcal{S}_i$ and $m\in\mathcal{S}_j$ where $i\ne j$, perform \textbf{device exchanging adjustment} if permitted. Then $\boldsymbol h_i$, $\boldsymbol h_j$ and $DS$ are accordingly updated.
		\Until no edge association adjustment is permitted by any device $n\in\mathcal{N}$.
		\State
		\State Obtain the optimal edge association $DS^*$ with each $\mathcal{S}_i\in DS^*$ achieving $f_n^*$ and $\beta_{i:n}^*,n\in\mathcal{S}_i$.
		
	\end{algorithmic}
\end{algorithm}

Next, we devise an edge association algorithm to achieve cost efficiency in HFEL for all the edge servers and seek feasible computation and communication resource allocation for their training groups. Note that in our scenario, each edge server has perfect knowledge of the channel gains and computation capacities of its local training group which can be obtained by feedback. They also can connect with each other through cellular links. Thus, our decentralized edge association process is implemented by all the edge servers, which consists of two steps: \emph{initialized allocation} and \emph{edge association} as described in Algorithm \ref{device_schedule}.

In the first stage, \emph{initialization allocation} procedure is as follows.
\begin{itemize}
	\item{}
	First for each edge server $i\in \mathcal{K}$, local training group $\mathcal{S}_i$ is randomly formed.
	\item{}
	Then given $\mathcal{S}_i$, edge server $i$ solves resource allocation subproblem, i.e., obtaining $f_n^*$ and $\beta_{i:n}^*,\forall n\in\mathcal{S}_i$ and deriving $v(\mathcal{S}_i)$.
	\item{}
	After the initial edge associations of all the edge servers complete, an initial edge association strategy $DS=\{\mathcal{S}_i,...,\mathcal{S}_K\}$ can be achieved.
\end{itemize}

In the second stage, edge servers execute edge association by conducting permitted edge association adjustments in an iterative way until no local training group will be changed. At each iteration, edge servers involved will calculate their own utilities. Specially, a historical group set $\boldsymbol h_i$ is maintained for each edge server $i$ to record the group composition it has formed before with the corresponding utility value so that repeated calculations can be avoided. 

Take device transferring adjustment for example. During an iteration, an edge server $i$ firstly contends to conduct device transferring adjustment. That is, edge server $i$ transfers its device $n$ from $\mathcal{S}_i$ to another edge server $j$'s training group $\mathcal{S}_j$. And we define $\mathcal{S}'_i=\mathcal{S}_i\setminus n$ and $\mathcal{S}'_j=\mathcal{S}_j\cup \{n\}$. This leads to a change of edge association strategy from $DS^1$ to $DS^2$. Secondly, note that each edge server $i$ maintains a historical set $\boldsymbol h_i$ to record the group composition it has formed before with the corresponding utility value. It enables edge server $i$ and $j$ to reckon their utility changes as $\Delta_i=v(\mathcal{S}'_i)-v(\mathcal{S}_i)$ and $\Delta_j=v(\mathcal{S}'_j)-v(\mathcal{S}_j)$, which can also reflect the system-wide utility improvement $\Delta=\Delta_i+\Delta_j=v(DS^2)-v(DS^1)$. Finally, edge server $i$ and $j$ can decide to conduct this device transferring adjustment when $\Delta>0$.

After the edge association algorithm converges, all the involved mobile devices will execute local training with the optimal resource allocation strategy $f_n^*$ and $\beta_{i:n}^*$ that are broadcast from the edge server.

Extensive performance evaluation in Section \ref{experiment} shows that the proposed edge association algorithm can converge in a fast manner, with an almost linear convergence speed. 

\section{Performance Evaluation}\label{experiment}
In this section, we carry out simulations to evaluate: 1) the global cost saving performance of the proposed resource scheduling algorithm and 2) HFEL performance in terms of test accuracy, training accuracy and training loss. From the perspective of devices' and edge servers' availability, all the devices and edge servers are distributed randomly within an entire $500M\times500M$ area.

\begin{table}[h]
	\caption{Simulation settings.}\label{parameter}
	\centering
	\begin{tabular}{|c|c|}
		\hline
		\textbf{Parameter}&\textbf{Value}\\
		\hline
		Maximum Bandwidth of Edge Servers&10 MHz\\
		\hline
		Device Transmission Power&200 mW\\
		\hline
		Device CPU Freq.&[1, 10] GHz\\
		\hline
		Device CPU Power&600 mW\\
		\hline
		Processing Density of Learning Tasks&[30, 100] cycle/bit\\
		\hline
		Background Noise&$10^{-8}$ W\\
		\hline
		Device Training Size&[5, 10] MB\\
		\hline
		Updated Model Size&25000 nats\\
		\hline
		Capacitance Coefficient&$2\times 10^{-28}$\\
		\hline
		Learning rate&0.0001\\
		\hline
	\end{tabular}
\end{table}

\subsection{Performance gain in cost reduction}
Typical parameters of devices and edge servers are provided in Table \ref{parameter} with image classification learning tasks on a dataset MNIST \cite{Lecun1998GradientBased}. To characterize mobile device heterogeneity for MNIST dataset, we have each device maintain only two labels over the total of $10$ labels and their sample sizes are different based on the law power in \cite{li2018federated}. Furthermore, each device trains with full batch size. Under varying device number from $15$ to $60$ and edge server number from $5$ to $25$, we compare our algorithm to the following schemes to present the performance gain in cost reduction with local training accuracy $\theta=0.9$ and edge training accuracy $\varepsilon=0.9$:

\begin{itemize}
	\item{}
	\emph{Random edge association}: each edge server $i$ selects the set of mobile devices $\mathcal{S}_i$ in a random way and then solves the optimal resource allocation for $\mathcal{S}_i$. That is, it only optimizes resource allocation subproblem given a set of devices.
	\item{}
	\emph{Greedy edge association}: each device can select the connected edge server sequentially based on the geographical distance to each edge server in an ascending order. After that, each edge server $i$ solves the optimal resource allocation with $\mathcal{S}_i$. It also only optimizes resource allocation subproblem without edge association similar to random resource allocation.
	\item 
	\emph{Computation optimization}: in this scheme, resource allocation subproblem for each $\mathcal{S}_i,i\in\mathcal{K}$ solves optimal computation capacity $f_{n\in\mathcal{S}_i}^*$ given evenly distribution of bandwidth ratio.
	\item 
	\emph{Communication optimization}: in this scheme, resource allocation subproblem for each $\mathcal{S}_i,i\in\mathcal{K}$ solves optimal bandwidth ratio allocation $\beta_{i:n}^*$ with random computation capacity decision $f_{n\in\mathcal{S}_i}\in[f_n^{min},f_n^{max}]$.
	\item{}
	\emph{Uniform resource allocation}: in this scheme, we leverage the same edge association strategy as our proposed algorithm. While in the resource allocation subproblem, the bandwidth of each edge server $i$ is evenly distributed to mobile devices in $\mathcal{S}_i$ and the computation capacity of $n\in\mathcal{S}_i$ is randomly determined between $f_n^{min}$ and $f_n^{max}$. That is, edge association subproblem is solved without resource allocation optimization. 
	\item{}
	\emph{Proportional resource allocation}: for all the edge servers, we as well adopt edge association strategy to improve $\{\mathcal{S}_i:i\in\mathcal{K}\}$. While in the resource allocation subproblem, the bandwidth of each edge server $i$ is distributed to each $n\in\mathcal{S}_i$ reversely proportional to the distance $l_{i,n}$ such that communication bottle can be mitigated. Similarly, random computation capacity decision of $n$ is $f_{n\in\mathcal{S}_i}\in[f_n^{min},f_n^{max}]$. Similar to uniform resource allocation, only edge association subproblem is solved.
\end{itemize}

\begin{figure}[tp]
	\centering
	\begin{minipage}{0.32\textwidth}
		\centering
		\includegraphics[width=\textwidth]{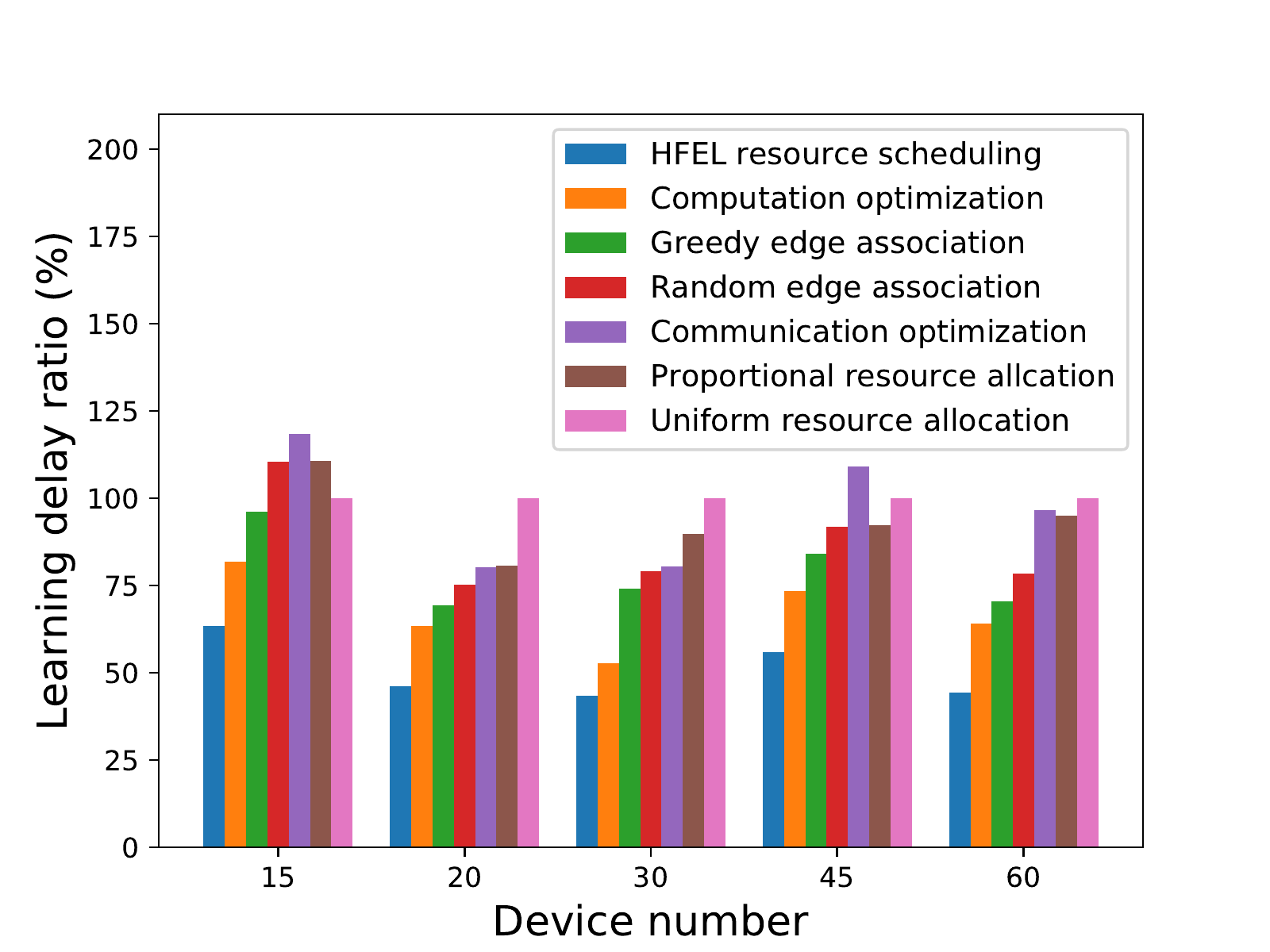}
		\centering
		\caption{Learning delay ratio under growing device number.}\label{delay_under_device}
	\end{minipage}
	\hfill
	\begin{minipage}{0.32\textwidth}
		\centering
		\includegraphics[width=\textwidth]{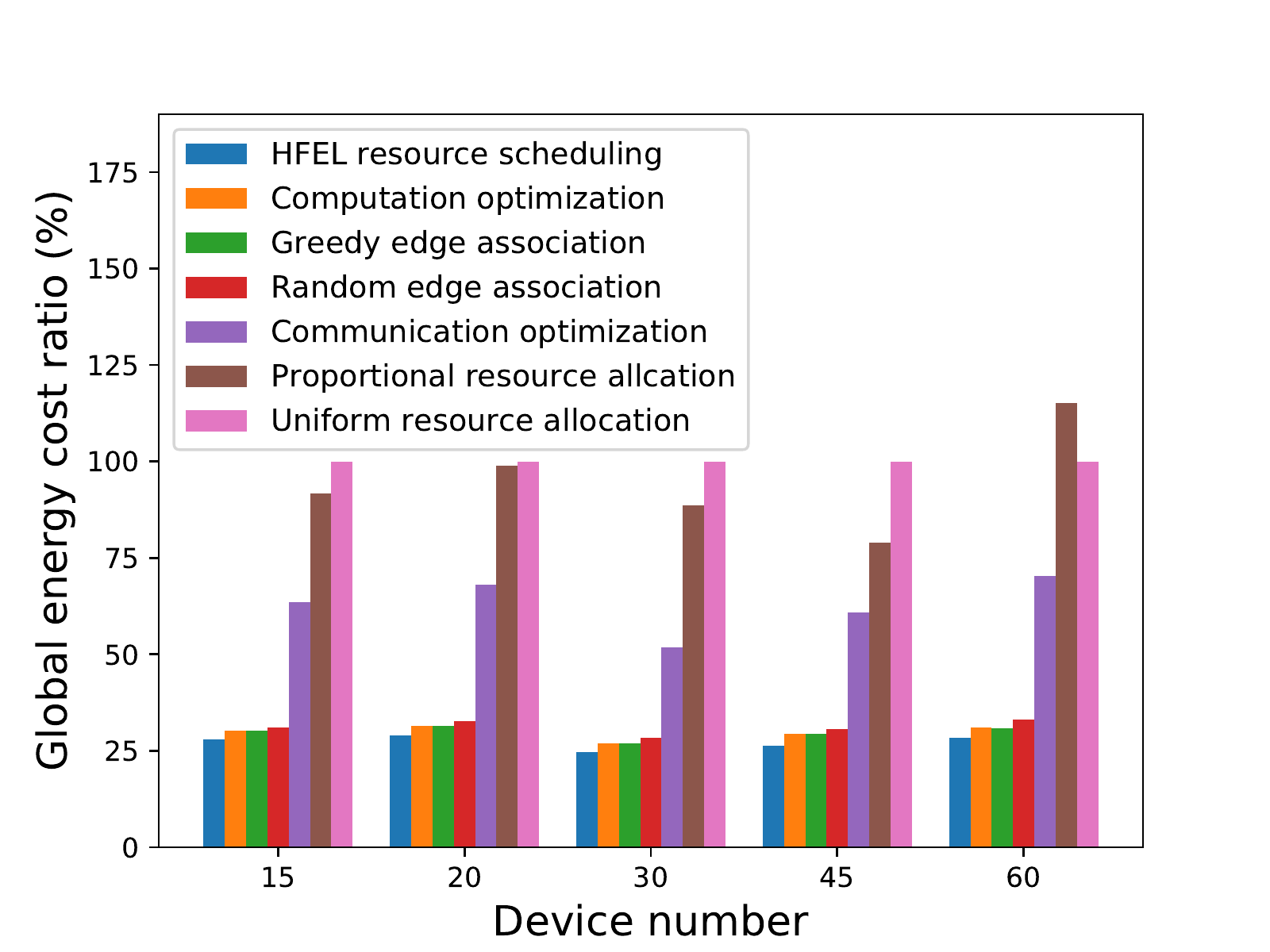}
		\centering
		\caption{Global energy ratio under growing device number.}\label{energy_under_device}
	\end{minipage}
	\hfill
	\begin{minipage}{0.33\textwidth}
		\centering
		\includegraphics[width=\textwidth]{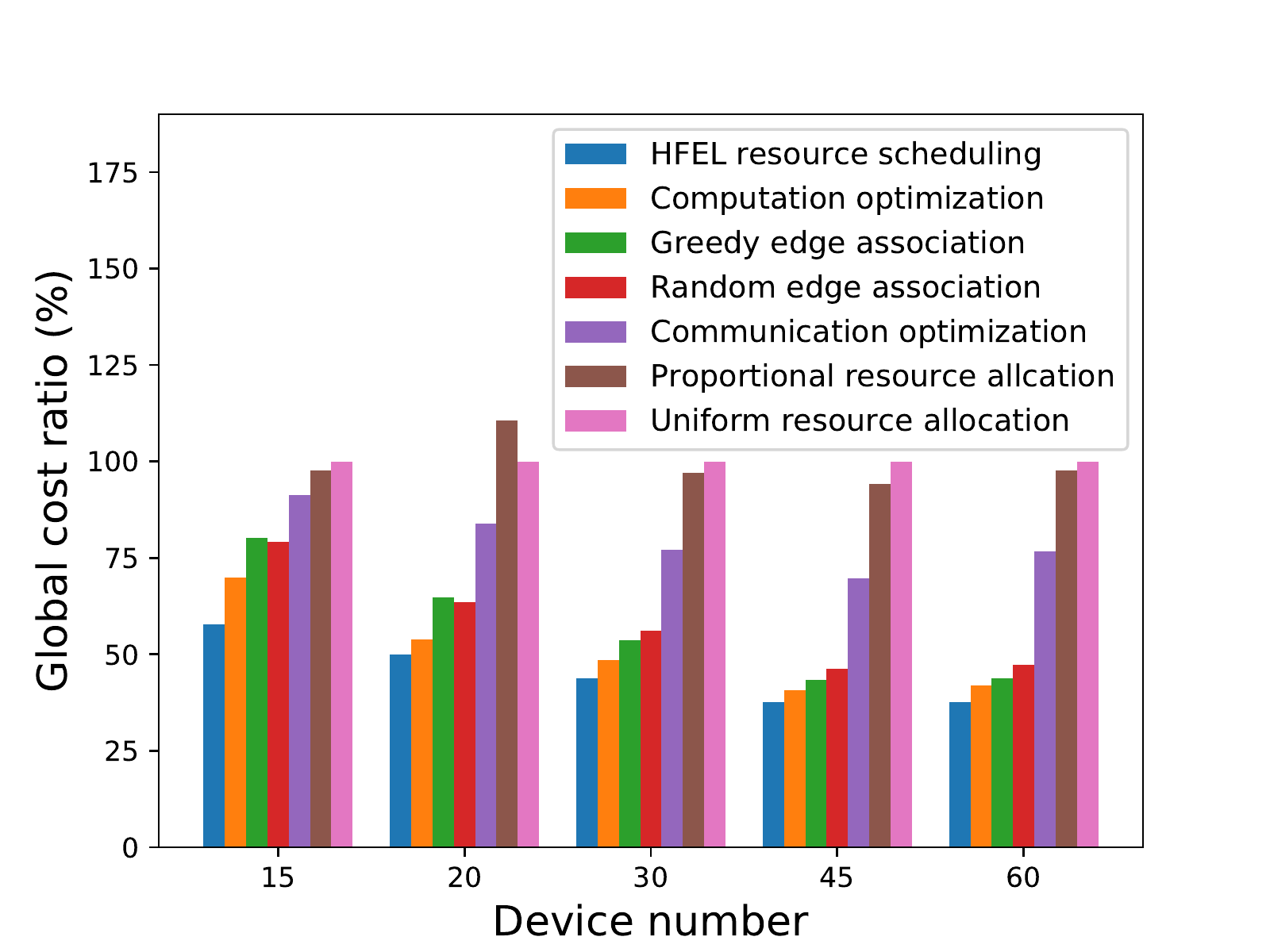}
		\centering
		\caption{Global cost ratio under growing device number.}\label{cost_under_device}
	\end{minipage}
	\vfill
	\begin{minipage}{0.32\textwidth}
		\centering
		\includegraphics[width=\textwidth]{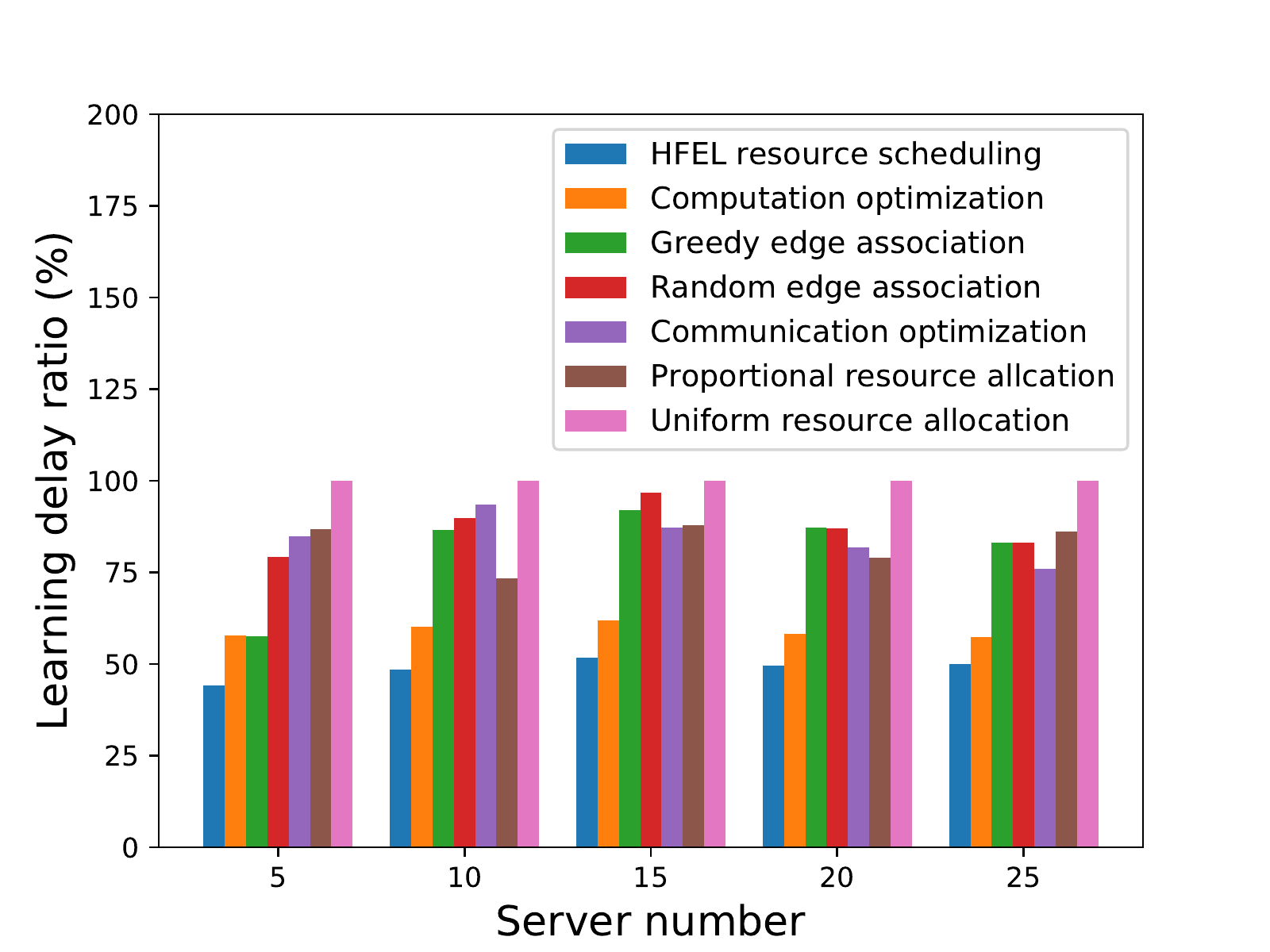}
		\centering
		\caption{Learning delay ratio under growing server number.}\label{delay_under_server}
	\end{minipage}
	\hfill
	\begin{minipage}{0.32\textwidth}
		\centering
		\includegraphics[width=\textwidth]{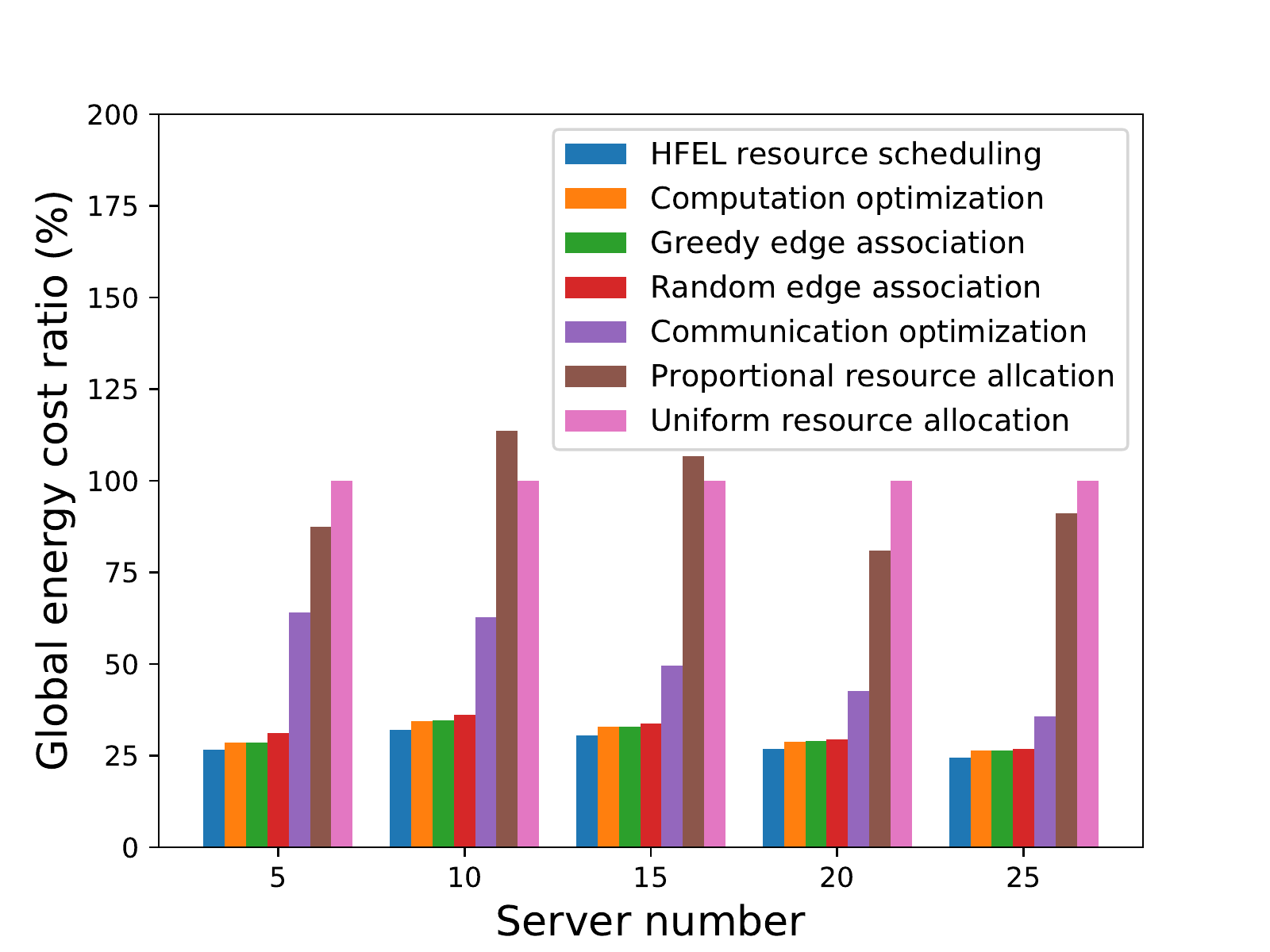}
		\centering
		\caption{Global energy ratio under growing server number.}\label{energy_under_server}
	\end{minipage}
	\hfill
	\begin{minipage}{0.33\textwidth}
		\centering
		\includegraphics[width=\textwidth]{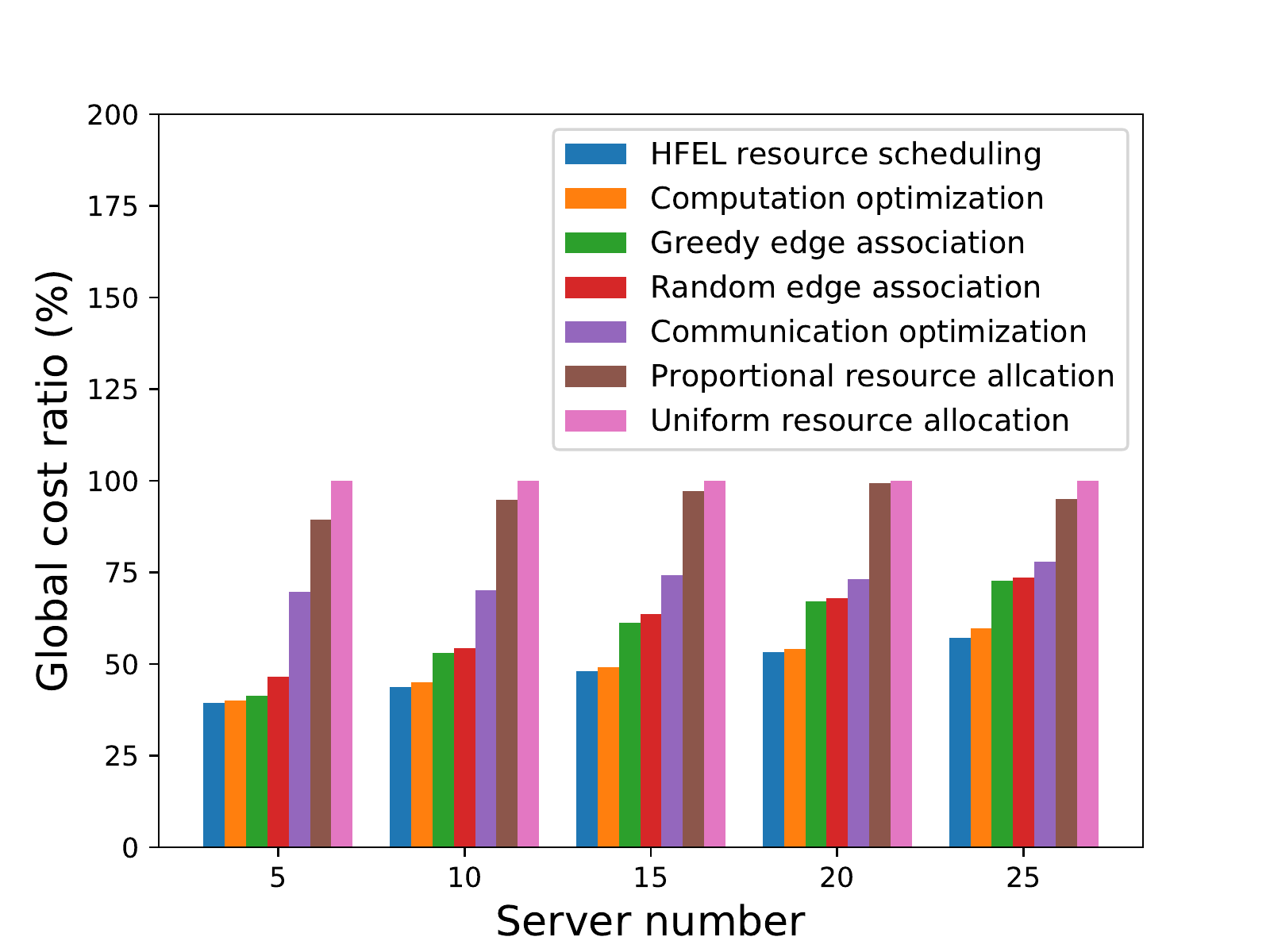}
		\centering
		\caption{Global cost ratio under growing server number.}\label{cost_under_server}
	\end{minipage}
\end{figure}

As presented in Fig. \ref{delay_under_device} to Fig. \ref{cost_under_server} in which uniform resource allocation is regarded as benchmark, our HFEL algorithm achieves the lowest global energy ratio, learning delay ratio and global cost ratio compared to the proposed schemes.

First we explore the impact of different device numbers on the performance gain in cost reduction by fixing edge server number as $5$ in Fig. \ref{delay_under_device} to Fig. \ref{cost_under_device}. Under the weights of energy and delay as $\lambda_e=0$ and $\lambda_t=1$ in Fig. \ref{delay_under_device}, HFEL algorithm accomplishes a satisfying learning delay ratio as $63.3\%, 46.2\%, 43.3\%, 56.0\%$ and $44.4\%$ compared to uniform resource allocation as device number grows. Similarly in Fig. \ref{energy_under_device} with energy and delay weights as $\lambda_e=1$ and $\lambda_t=0$, our HFEL scheme achieves global energy cost ratio as $30\%$ at most compared to uniform resource allocation and $5.0\%$ compared to computation optimization scheme. As described in Fig. \ref{cost_under_device} in which weights of time and energy are randomly assigned, i.e., $\lambda_e,\lambda_t\in [0,1]$ and $\lambda_e+\lambda_t=1$, it shows that HFEL algorithm still outperforms the other six schemes. Compared to computation optimization, greedy device allocation, random device allocation, communication optimization, proportional resource allocation and uniform resource allocation schemes, our algorithm is more efficient and fulfills up to $10\%, 14.0\%, 20.0\%, 51.2\%$, $61.5\%$ and $57.7\%$ performance gain in global cost reduction, respectively.

Then with device number fixed as $60$, Fig. \ref{delay_under_server} to Fig. \ref{cost_under_server} exhibit that our HFEL algorithm still has better performance gain than the other comparing schemes. For example, compared to uniform resource allocation scheme, the HFEL scheme obtains the highest learning delay ratio as $51.6\%$ in Fig. \ref{delay_under_server} and the highest global energy cost ratio as $50.0\%$ in Fig. \ref{energy_under_server}. Meanwhile, Fig. \ref{cost_under_server} presents that our HFEL algorithm can achieve up to $5.0\%, 25.0\%, 24.0\%$, $28.0\%$ and $40.3\%$ global cost reduction ratio over computation optimization, greedy device allocation, random device allocation, communication optimization and proportional resource allocation schemes, respectively.

It is interesting to find that the performance gain of our HFEL scheme compared to the benchmark in global energy ratios as Fig. \ref{energy_under_device} and Fig. \ref{energy_under_server} is better than that in learning delay ratios shown in Fig. \ref{delay_under_device} and Fig. \ref{delay_under_server}. That is because in the objective function, the numerical value of energy cost is much larger than the value of learning delay, which implies that the energy weight plays a leading role in global cost reduction.
\begin{figure}[tp]
	\centering
	\begin{minipage}{0.49\textwidth}
		\centering
		\includegraphics[width=\textwidth]{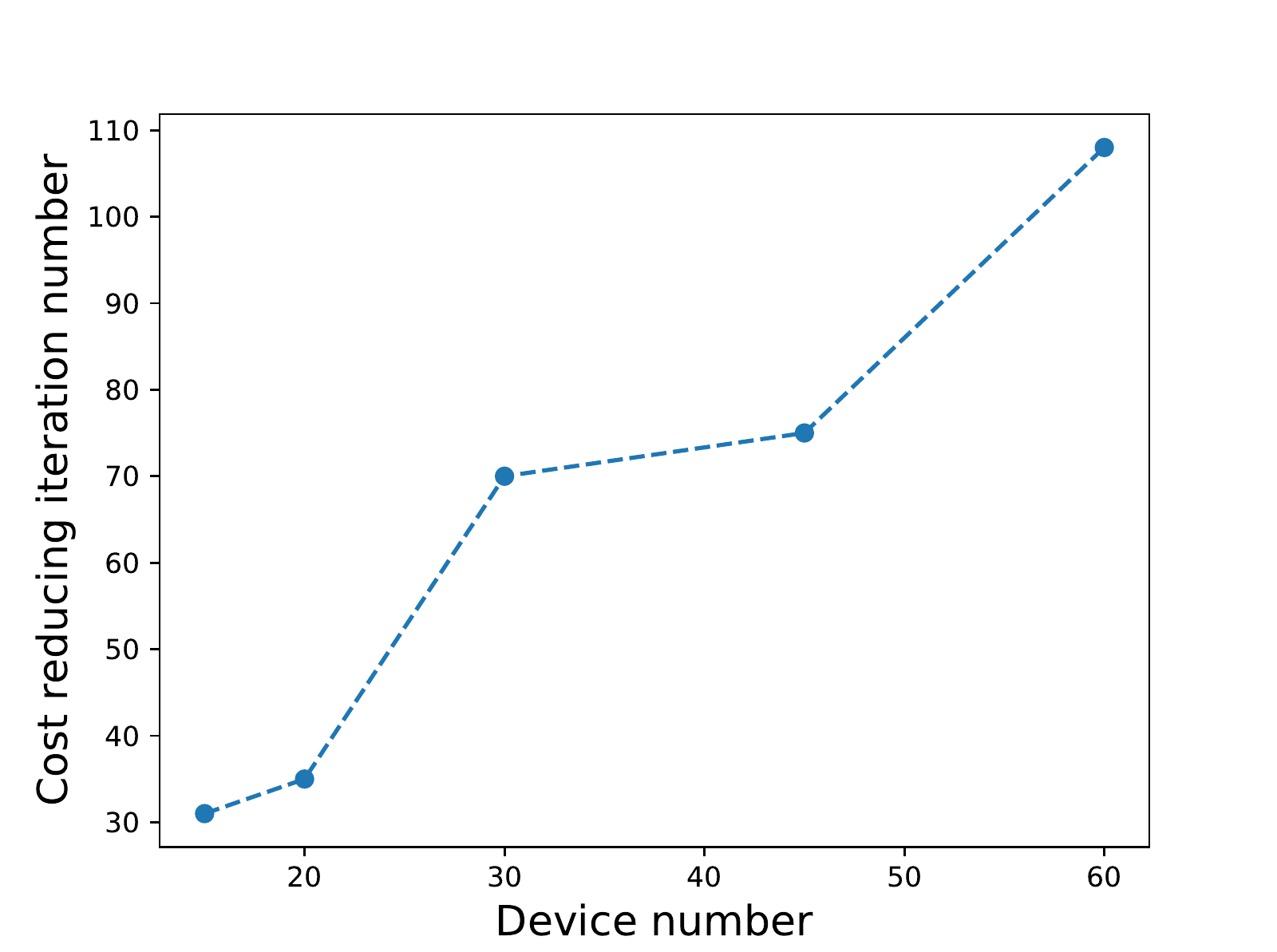}
		\centering
		\caption{Cost reducing iteration number under growing devices.}\label{iteration_num_device}
	\end{minipage}
	\hfill
	\begin{minipage}{0.49\textwidth}
		\centering
		\includegraphics[width=\textwidth]{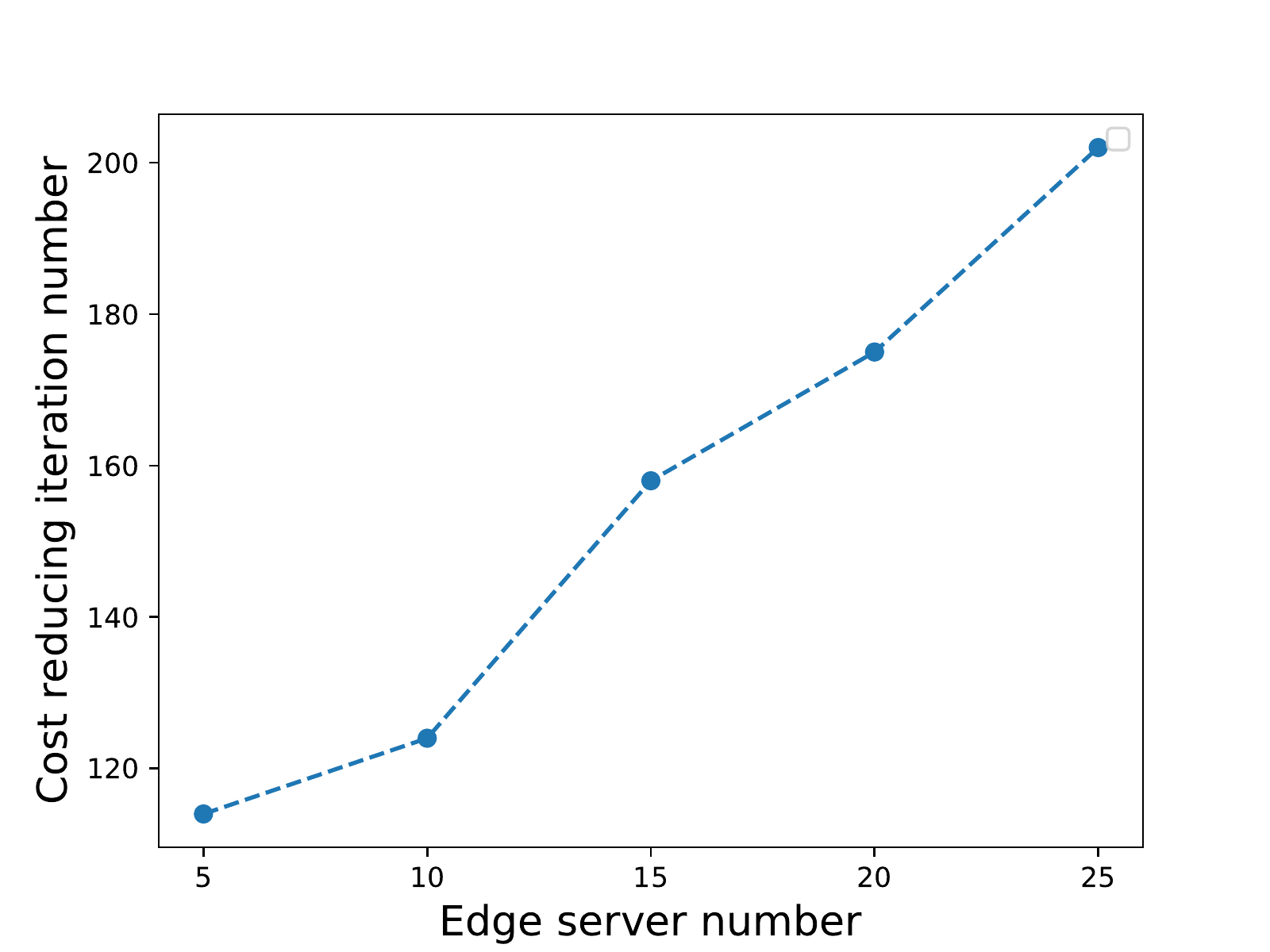}
		\centering
		\caption{Cost reducing iteration number under growing servers.}\label{iteration_num_server}
	\end{minipage}
\end{figure}
Note that greedy device allocation and random device allocation schemes only optimize resource allocation subproblem without edge association. While proportional resource allocation and uniform resource allocation strategies solve edge association without resource allocation optimization. It can be figured out that the performance gain of resource allocation optimization in global cost reduction greatly dominates that of edge association solution.

Further, we show the average iteration number of our algorithm in Fig. \ref{iteration_num_device} with growing number of devices from $15$ to $60$, and the average iteration number of our algorithm in Fig. \ref{iteration_num_server} with the number of edge servers ranging from $5$ to $25$. The results show that the convergence speed of the proposed edge association strategy is fast and grows (almost) linearly as the numbers of mobile device and edge server increase, which reflects the computation efficiency of edge association algorithm.

\subsection{Performance gain in training loss and accuracy}
In this subsection setting, the performance of HFEL is validated on dataset MNIST \cite{Lecun1998GradientBased} and FEMNIST \cite{caldas2018leaf} (an extended MNIST dataset with $62$ labels which is partitioned based on the device of the digit or character) compared to the classic FedAvg algorithm \cite{McMahan2016McMahan}. In addition to different numbers of labels in devices for training on MNIST and FEMNIST dataset, the number of samples of each device varies in different datasets. Specifically, for MNIST and FEMNIST dataset, the number of data samples are in the ranges of [15,4492] and [184,334] in each device \cite{dinh2019federated}, respectively. Moreover, each device trains with full batch size on both MNIST and FEMNIST to perform image classification tasks, which utilize logistic regression with cross-entropy loss function.

We perform training experiments to show the advantages of HFEL scheme over FedAvg, a traditional device-cloud FL architecture not involving edge servers or resource allocation optimization\cite{McMahan2016McMahan}. We consider $5$ edge servers and $30$ devices participating in the training process for experiment. All the datasets are split with $75\%$ for training and $25\%$ for testing in a random way. In the training process, $1000$ global iterations are executed during each of which all the devices go through the same number of local iterations in both HFEL and FedAvg schemes.
\begin{figure*}[t]
	\centering
	\subfigure[Test accuracy.]{
		\label{glob_acc_mnist}
		\begin{minipage}{0.31\textwidth}
			\centering
			\includegraphics[width=\textwidth]{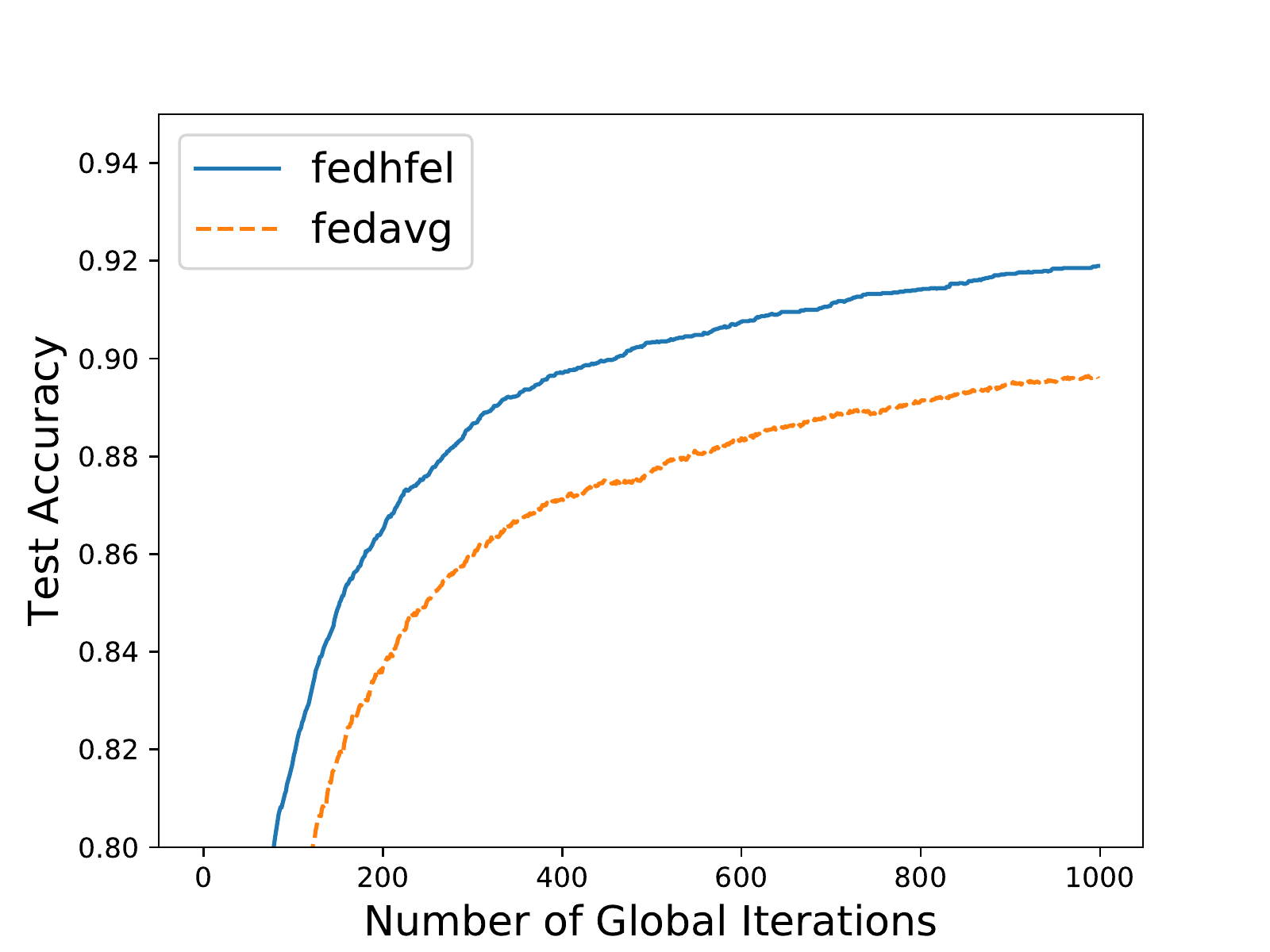}
	\end{minipage}}
	\subfigure[Training accuracy.]{
		\label{train_acc_mnist}
		\begin{minipage}{0.32\textwidth}
			\centering
			\includegraphics[width=\textwidth]{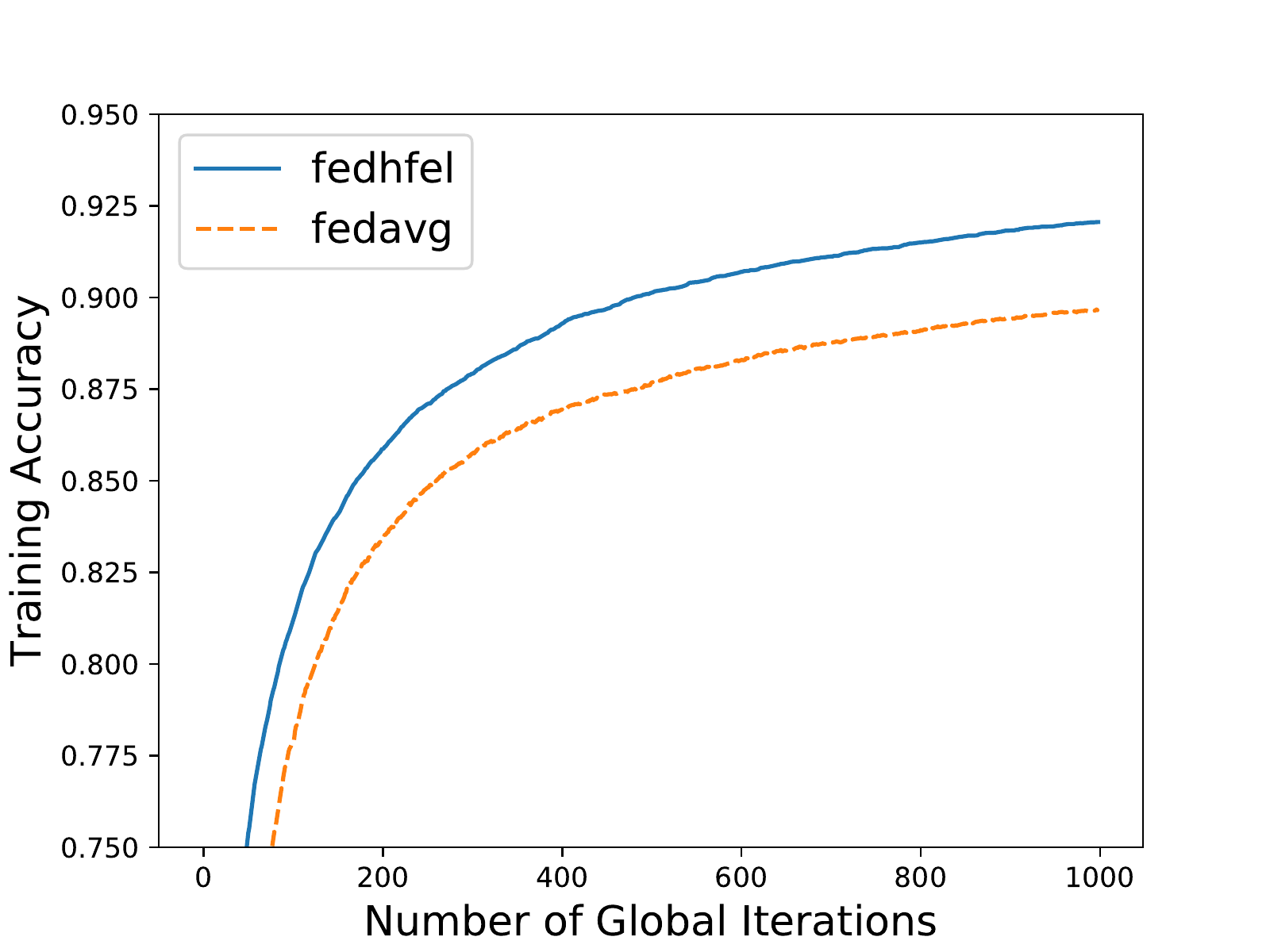}
	\end{minipage}}
	\subfigure[Training loss.]{
		\label{train_loss_mnist}
		\begin{minipage}{0.32\textwidth}
			\centering
			\includegraphics[width=\textwidth]{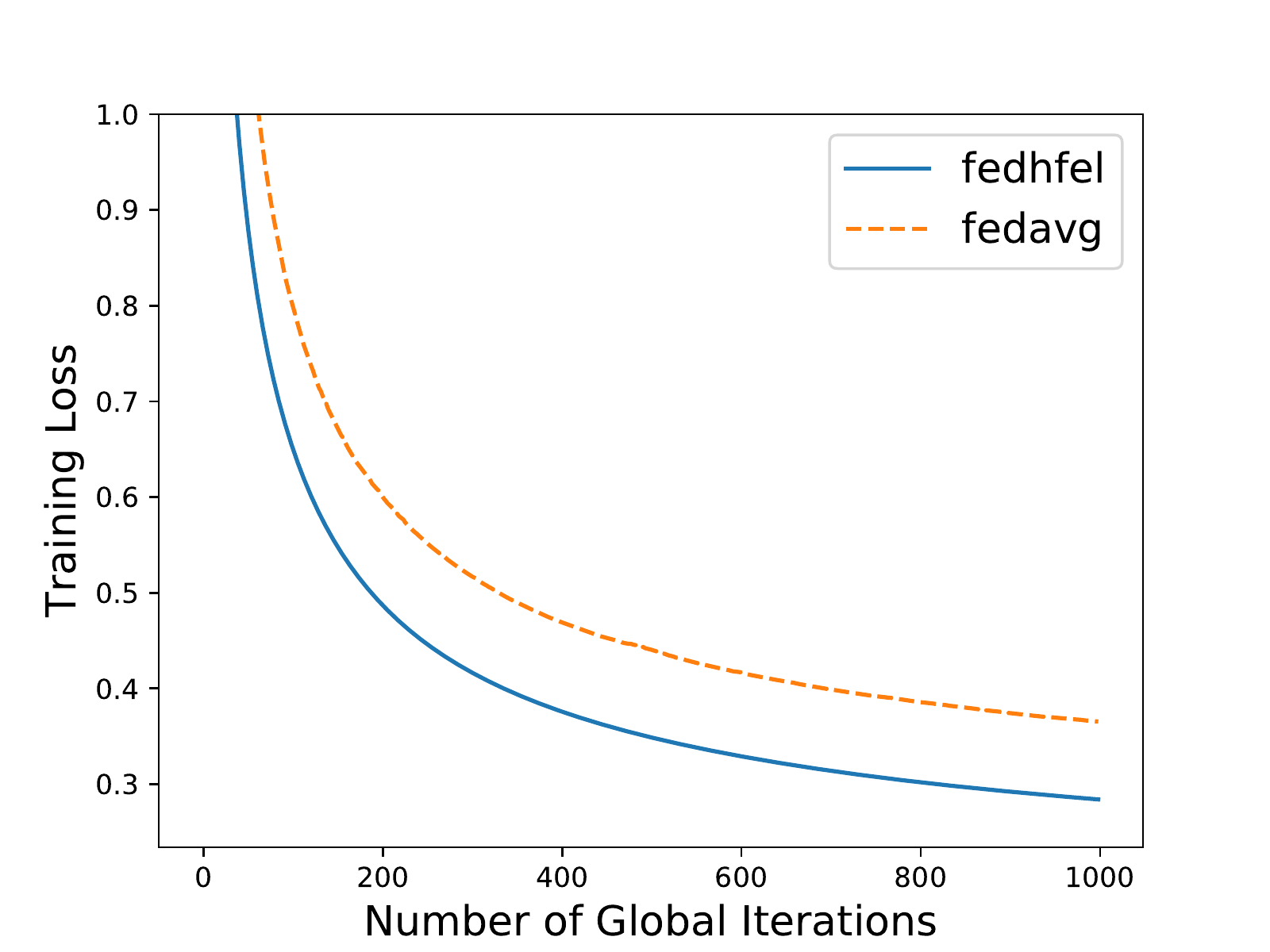}
	\end{minipage}}
	\centering
	\caption{Training results under MNIST.}\label{result_mnist}
\end{figure*}
\begin{figure*}[t]
	\centering
	\subfigure[Test accuracy.]{
		\label{glob_acc_nist}
		\begin{minipage}{0.31\textwidth}
			\centering
			\includegraphics[width=\textwidth]{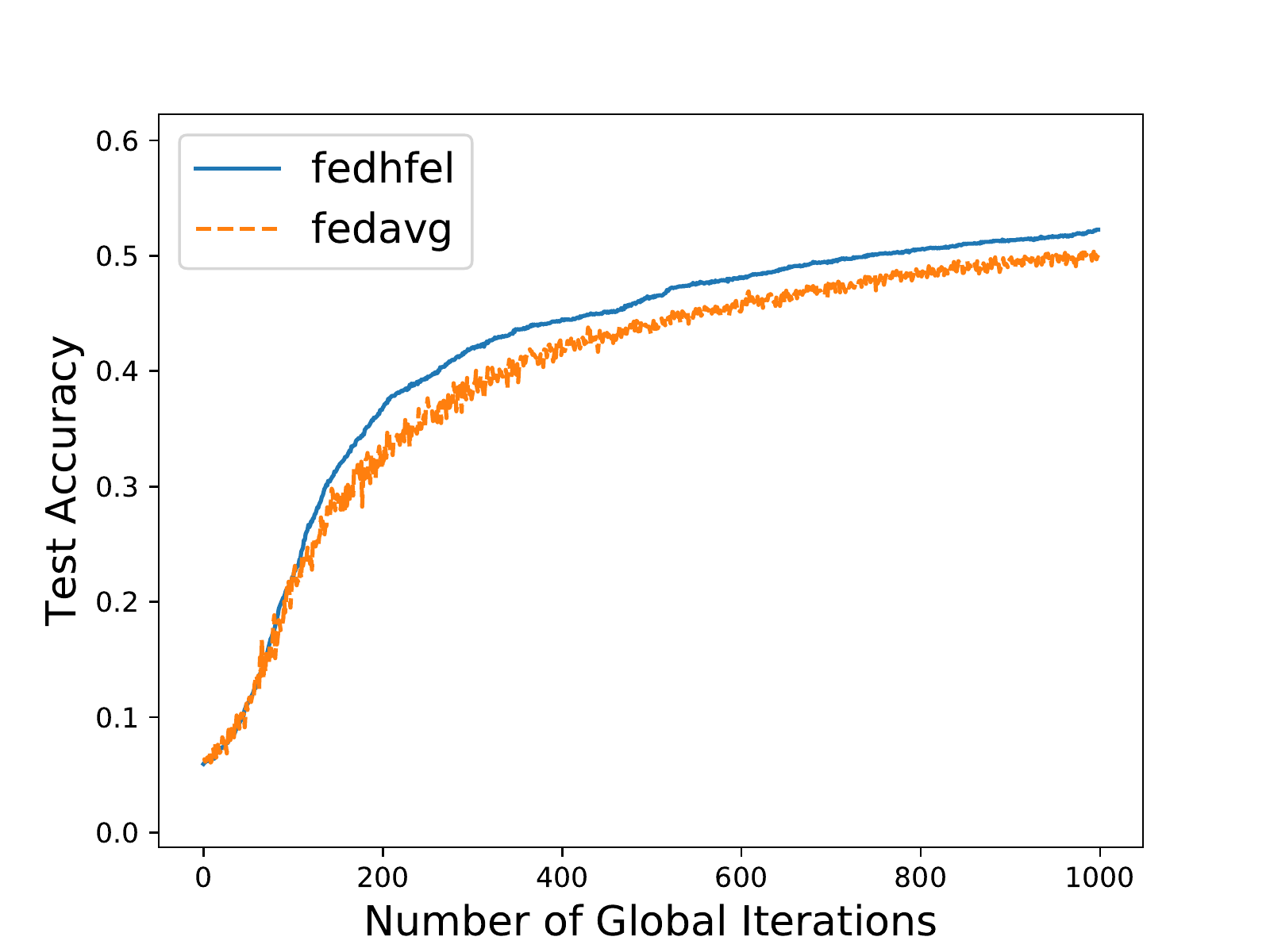}
	\end{minipage}}
	\subfigure[Training accuracy.]{
		\label{train_acc_nist}
		\begin{minipage}{0.32\textwidth}
			\centering
			\includegraphics[width=\textwidth]{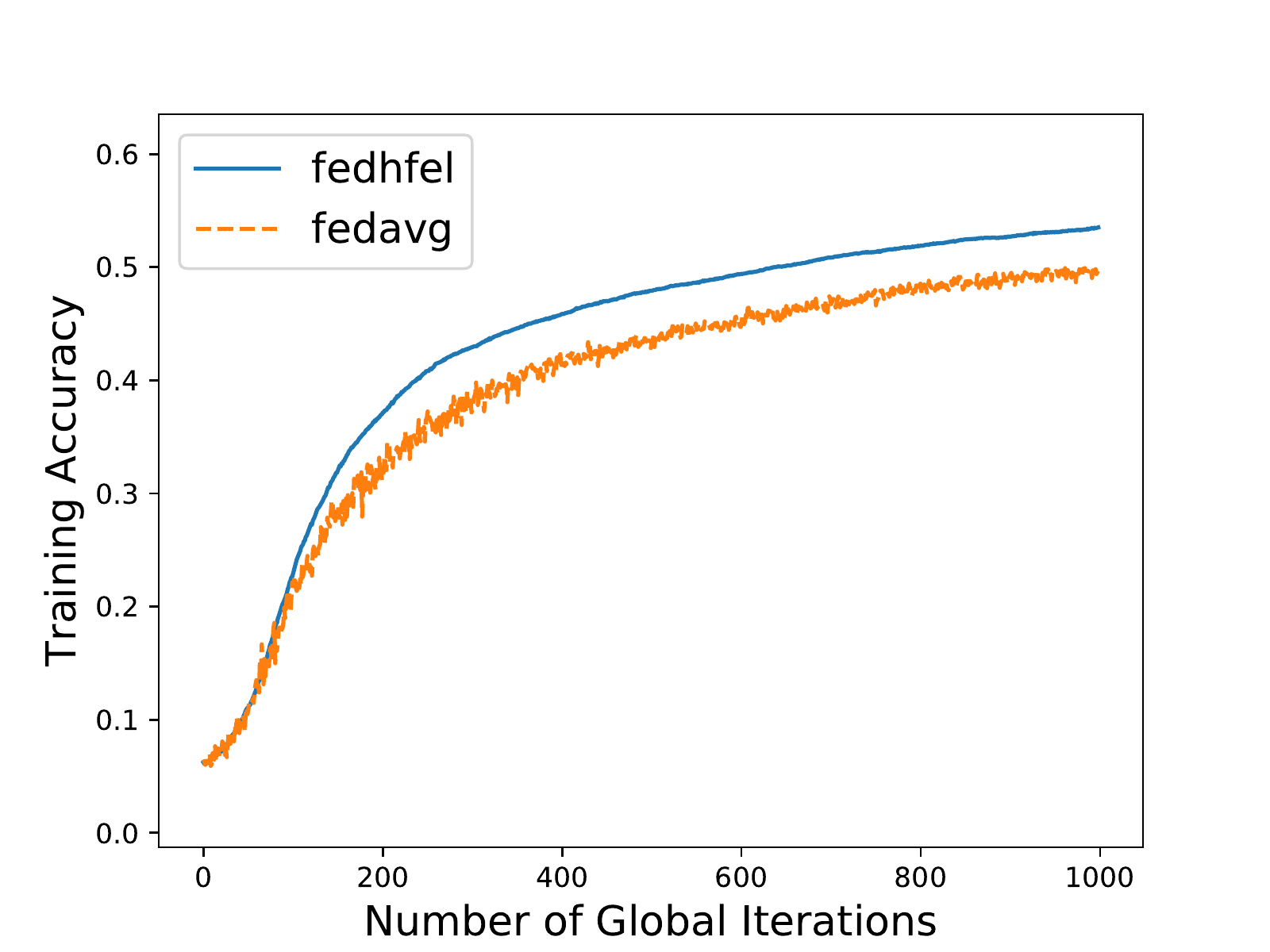}
	\end{minipage}}
	\subfigure[Training loss.]{
		\label{train_loss_nist}
		\begin{minipage}{0.32\textwidth}
			\centering
			\includegraphics[width=\textwidth]{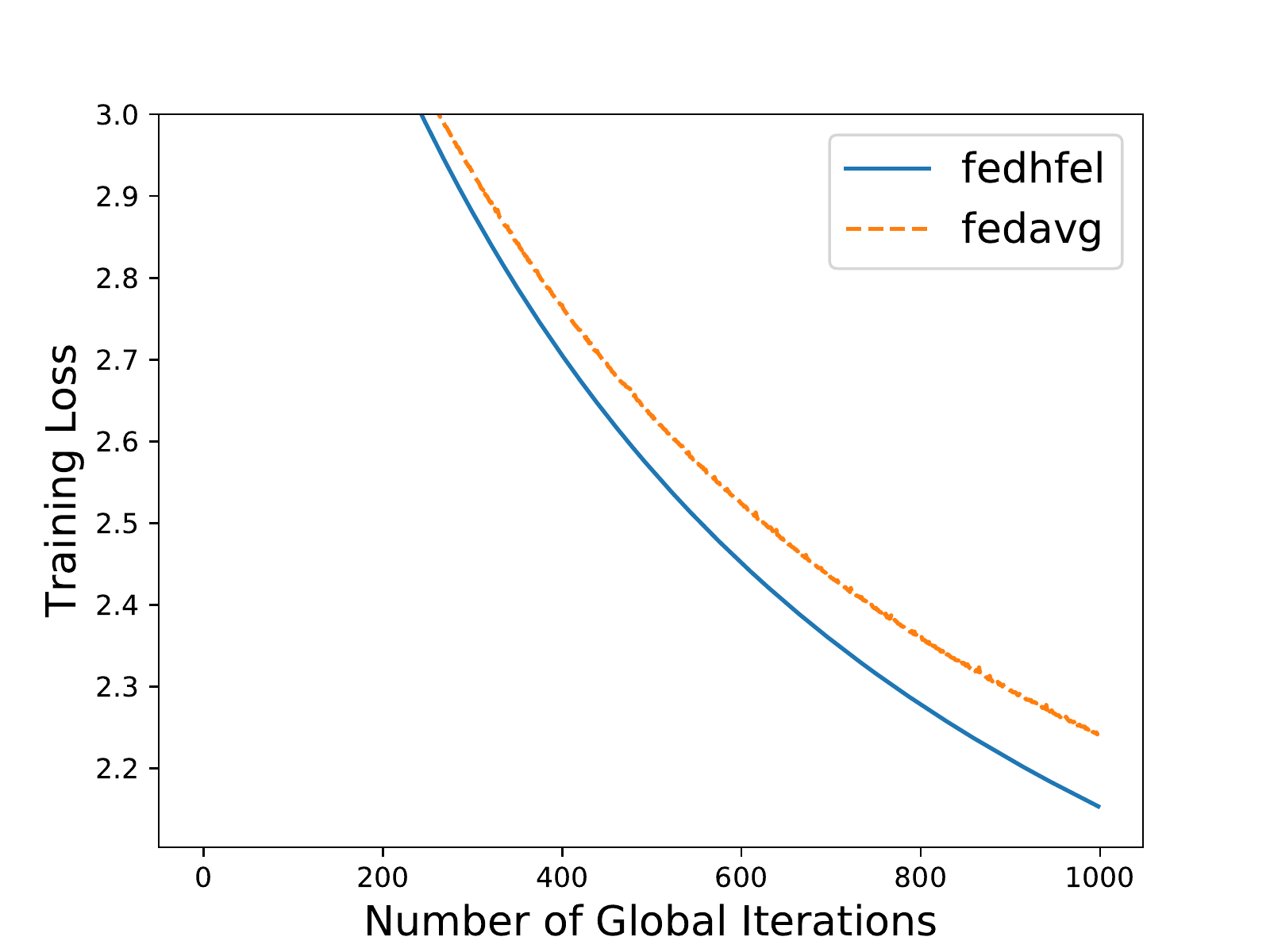}
	\end{minipage}}
	\centering
	\caption{Training results under FEMNIST.}\label{result_nist}
\end{figure*}

Fig. \ref{glob_acc_mnist}-\ref{train_loss_mnist} demonstrate test accuracy, training accuracy and training loss respectively on MNIST dataset as global iteration grows. As is shown, our HFEL algorithm has higher test accuracy and training accuracy than FedAvg both by around $5\%$. And HFEL has lower training loss than FedAvg by around $3\%$. That is for the fact that based on the same number of local iterations during one global iteration, devices in HFEL additionally undergo several rounds of model aggregation in edge servers such that they benefit from model updates at the edge. However for the devices in FedAvg, they only train with local datasets without receiving information from external network for learning improvement during a global iteration. 

Fig. \ref{glob_acc_nist}-\ref{train_loss_nist} present the training performance on dataset FEMNIST. Compared to FedAvg, the increments in terms of test accuracy and training accuracy of HFEL under FEMNIST are up to $4.4\%$ and $4.0\%$ respectively. While the reduction of training loss of HFEL compared with FedAvg under FEMNIST is around $4.1\%$. Because of larger number of data samples in each device and less number of labels to learn for MNIST than FEMNIST, HFEL reveals a higher accuracy and lower training loss on MNIST than FEMNIST dataset in Fig. \ref{result_mnist} and Fig. \ref{result_nist}. Hence it can be assumed that due to the characteristics naturally capturing device heterogeneity, FEMNIST dataset generated by partitioning data based on MNIST would generally obtain a worse learning performance than MNIST. 

\begin{figure}[t]
	\centering
	\begin{minipage}{0.49\textwidth}
		\centering
		\includegraphics[width=\textwidth]{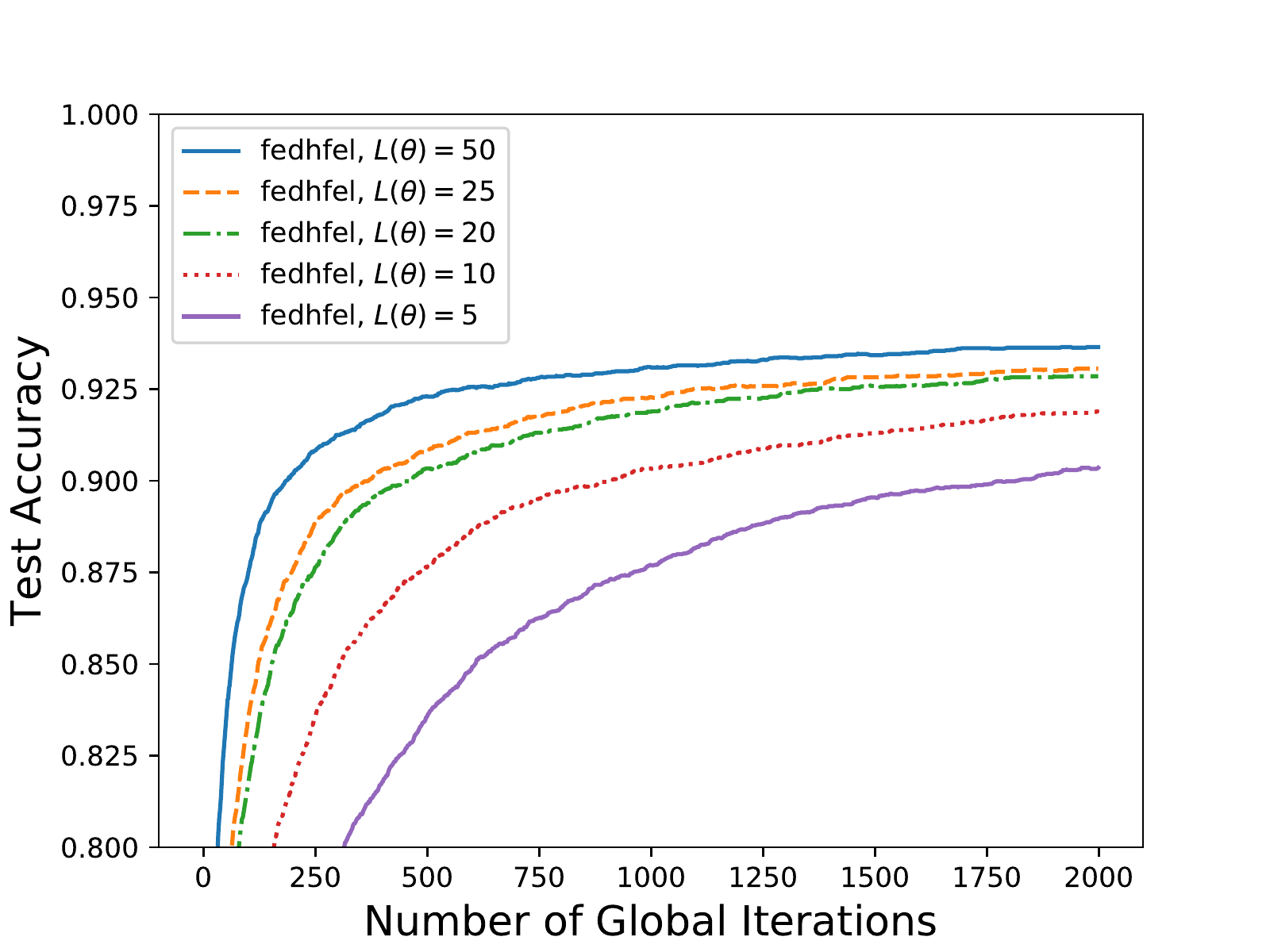}
		\centering
		\caption{Effect of growing local iterations under MNIST.}\label{loc_ep_mnist}
	\end{minipage}
	\hfill
	\begin{minipage}{0.49\textwidth}
		\centering
		\includegraphics[width=\textwidth]{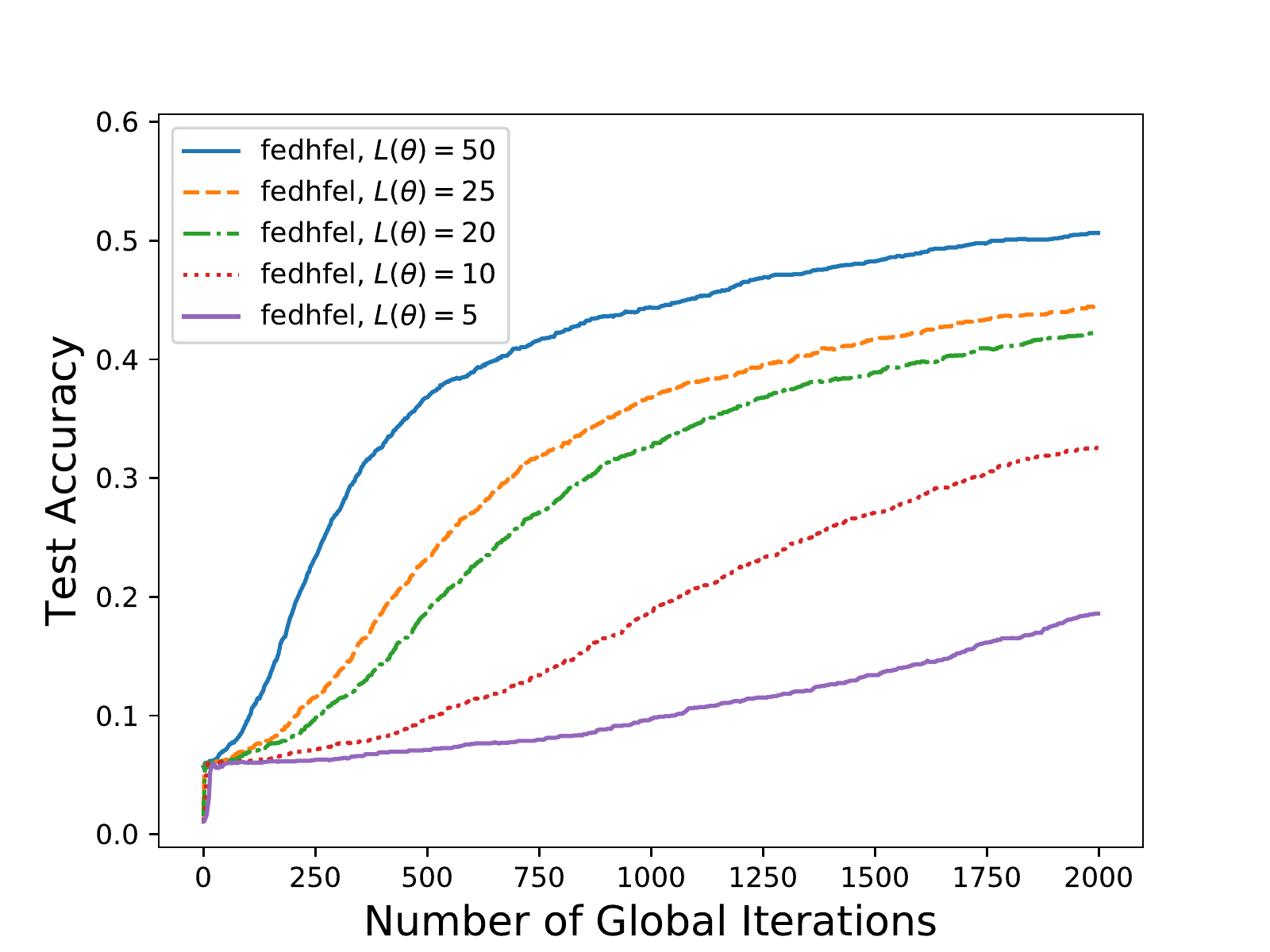}
		\centering
		\caption{Effect of growing local iterations under FEMNIST.}\label{loc_ep_nist}
	\end{minipage}
	\hfill
	\begin{minipage}{0.49\textwidth}
		\centering
		\includegraphics[width=\textwidth]{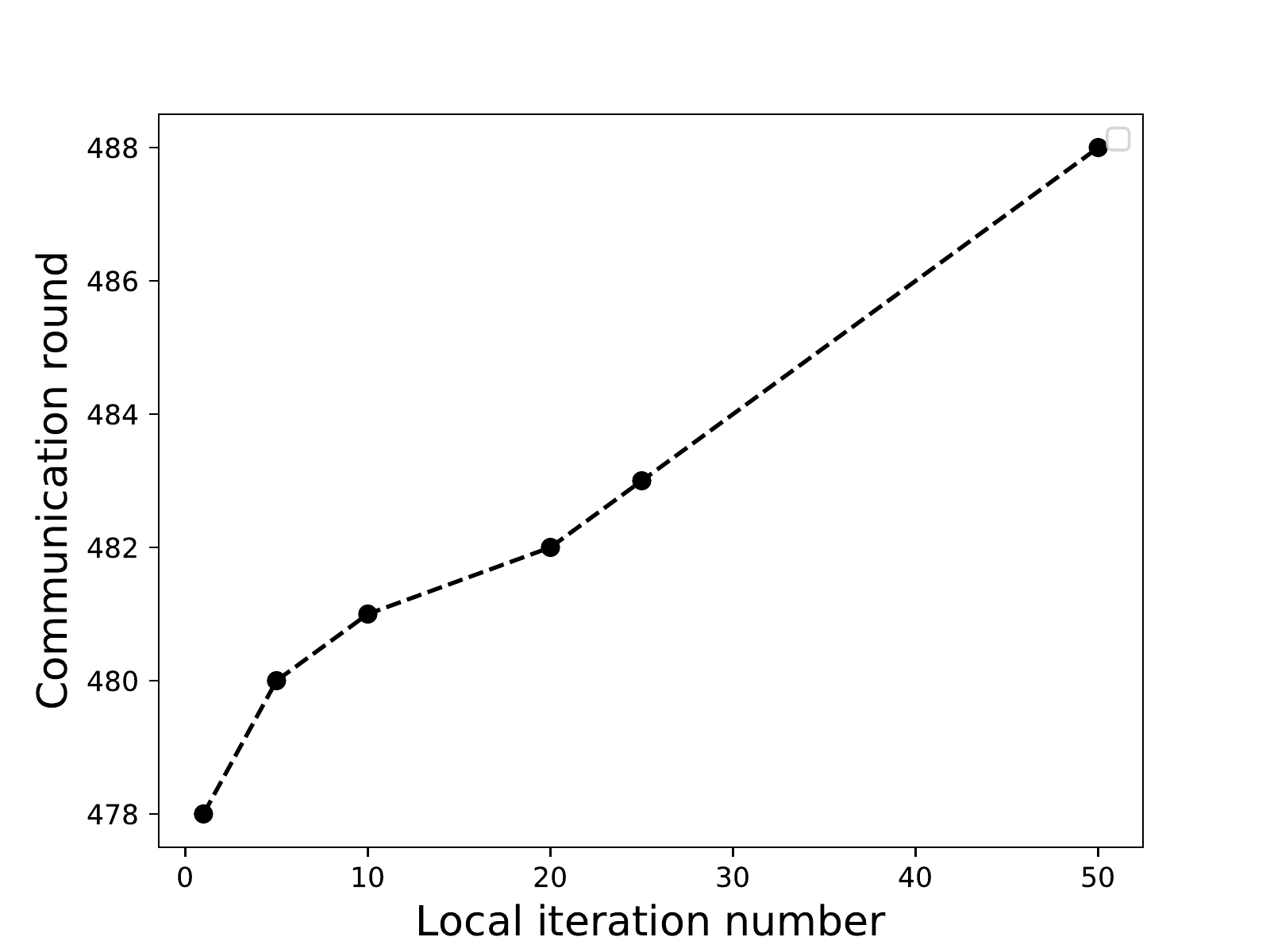}
		\centering
		\caption{Communication rounds with cloud under MNIST.}\label{com_round_mnist}
	\end{minipage}
	\hfill
	\begin{minipage}{0.49\textwidth}
		\centering
		\includegraphics[width=\textwidth]{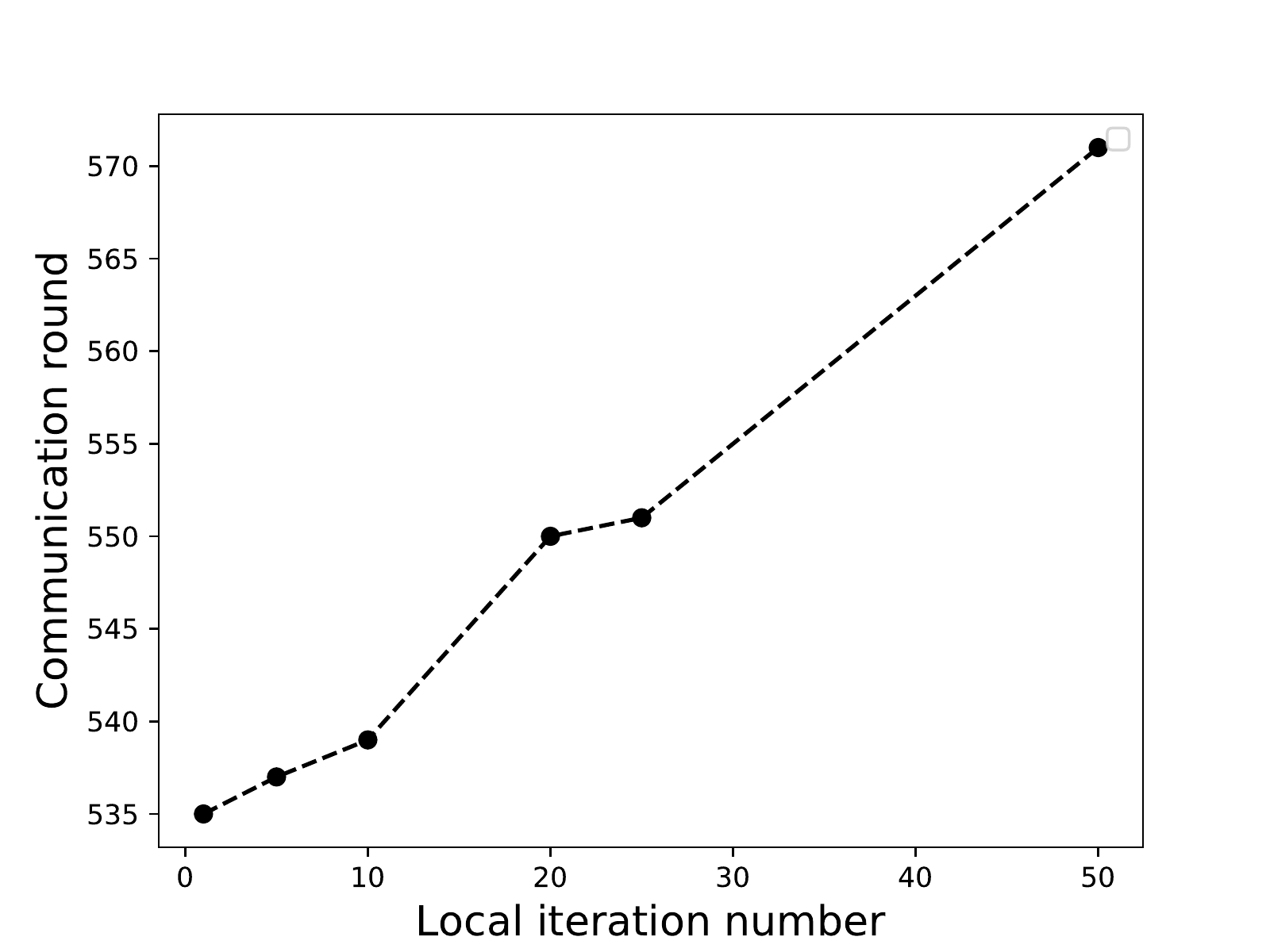}
		\centering
		\caption{Communication rounds with cloud under FEMNIST.}\label{com_round_nist}
	\end{minipage}
\end{figure}

\begin{figure}[t]
	\centering
	\begin{minipage}{0.49\textwidth}
		\centering
		\includegraphics[width=\textwidth]{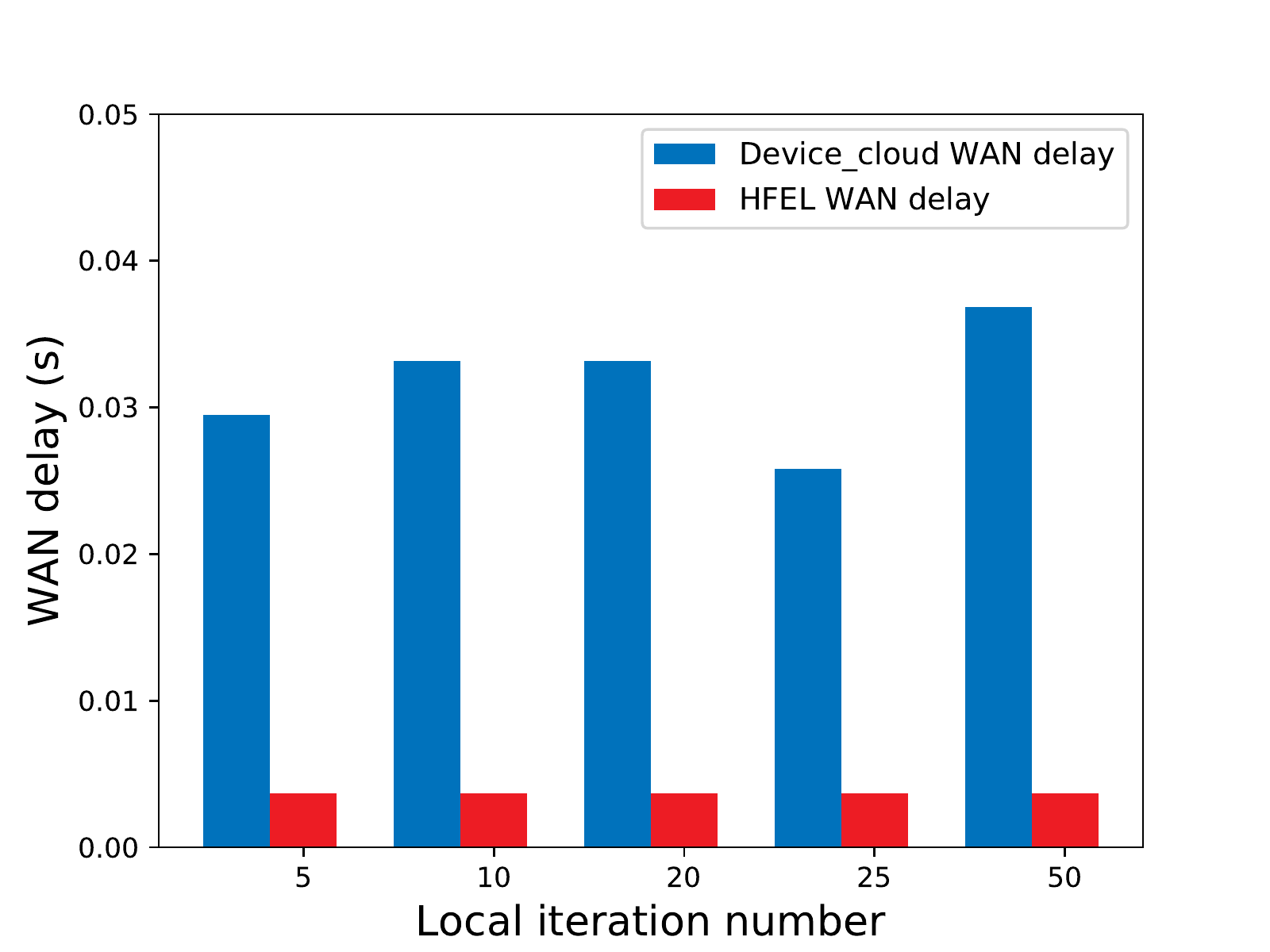}
		\centering
		\caption{WAN communication overhead.}\label{WAN_delay}
	\end{minipage}
	\hfill
	\begin{minipage}{0.49\textwidth}
		\centering
		\includegraphics[width=\textwidth]{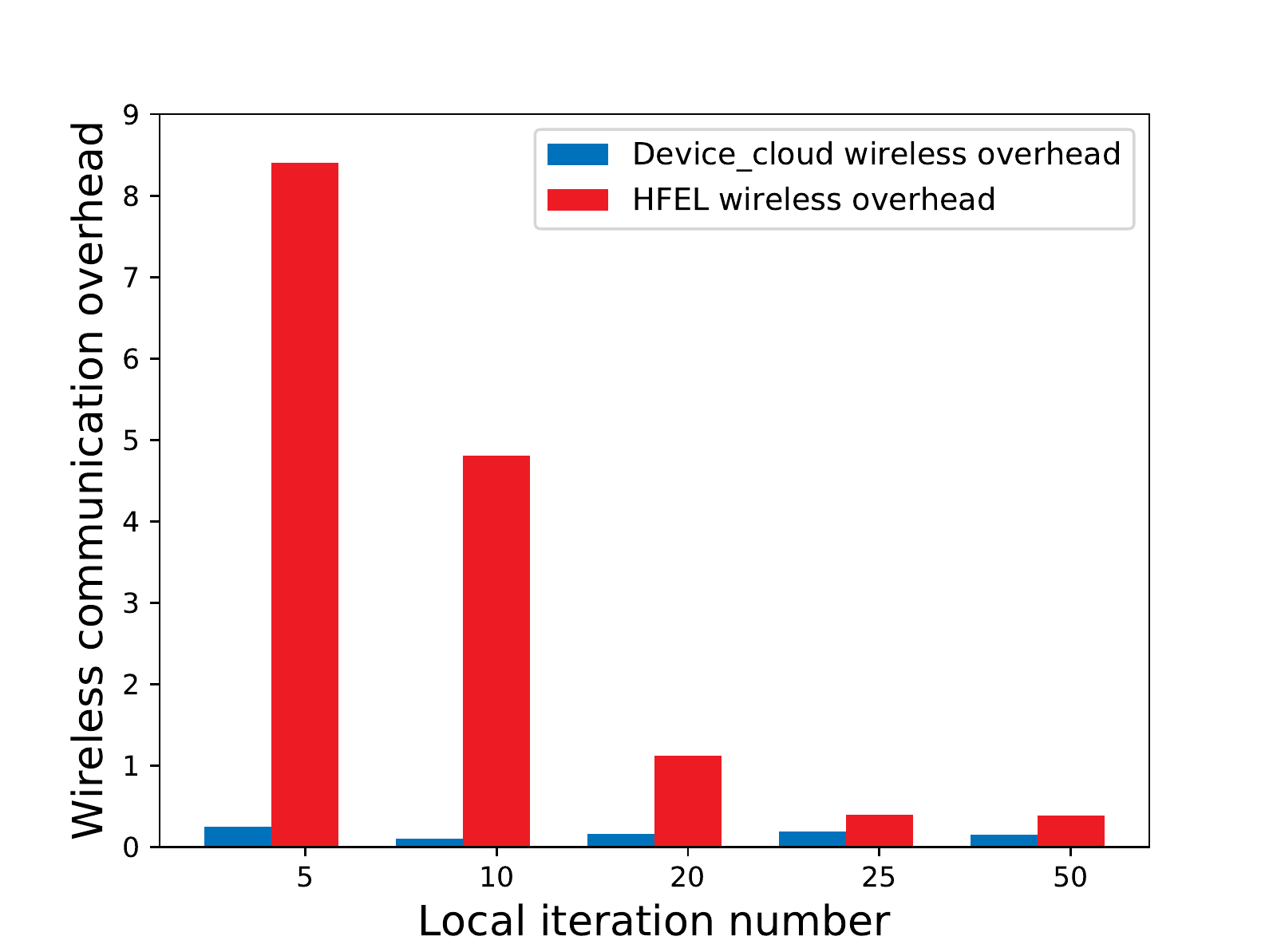}
		\centering
		\caption{Wireless communication overhead.}\label{wireless_overhead}
	\end{minipage}
\end{figure}

The effect of different local iteration numbers $L(\theta)=[5,10,20,25,50]$ on convergence speed is exhibited in Fig. \ref{loc_ep_mnist} and \ref{loc_ep_nist} through $2000$ global iterations. As we can see, with the same number of edge iterations as $5$ and an increase of local iteration number from $5$ to $50$, the convergence speed shows an obvious acceleration both in MNIST and FEMNIST datasets, which implies the growth of $L(\theta)$ has a positive impact on convergence time. 

Then we conduct experiments considering a fixed product of $L(\theta)$ and $I(\epsilon,\theta)$ as $100$ and the values of $L(\theta)$ growing from $1$ to $50$. Fig. \ref{com_round_mnist} and \ref{com_round_nist} show that a decreasing number of local iterations and increasing number of edge iterations lead to a reduction of communication rounds with the cloud to reach the accuracy of $0.9$ for MNIST dataset and $0.55$ for FEMNIST dataset, respectively. Hence, properly increasing edge iteration rounds can help to reduce propagation delay and improve convergence speed in HFEL.

Fig. \ref{WAN_delay} reveals a great advantage of WAN communication efficiency of HFEL over traditional device-cloud FL. Without edge aggregation, there are $N$ devices' local model parameters transmitted through WAN to the remote cloud in device-cloud FL. While in HFEL, after edge aggregation, $K$ (generally $K<<N$) edge servers' edge models, each of which is of similar size to a local model, are transmitted to the cloud. Considerable WAN transmission overheads can be saved in HFEL through edge model aggregation. Fig. \ref{wireless_overhead} shows that wireless communication overhead in HFEL decreases as local iteration number increases. While the wireless overhead of device-cloud FL keeps lower because each device transmits local model to the edge server via wireless connection for only one time. This illustrates that frequent communication between edge servers and devices consumes overhead for wireless data transmission. We should take a careful balance between local iteration number and edge iteration number if our objective turns to minimizing device training overhead.

\section{related work}
\label{related_work}

To date, federated learning (FL) has been envisioned as a promising approach to guarantee personal data security compared to conventional centralized training at the cloud. It only requires local models trained by mobile devices with local datasets to be aggregated by the cloud such that the global model can be updated iteratively until the training process converges.

Nevertheless, faced with long propagation delay in wide-area network (WAN), FL suffers from a bottleneck of communication overhead due to thousands of communication rounds required between mobile devices and the cloud. Hence a majority of studies have focused on reducing communication cost in FL \cite{article2016Jakub,caldas2018expanding,WANG2019CMFL,huang2018loadaboostlossbased}. Authors in \cite{article2016Jakub} proposed structured and sketched local updates to reduce the model size transmitted from mobile devices to the cloud. While authors in \cite{caldas2018expanding} introduced lossy compression and federated dropout to reduce cloud-to-device communication cost, extending the work in \cite{article2016Jakub}. \cite{WANG2019CMFL} figured out a communication-mitigated federated learning (CMFL) algorithm in which devices only upload local updates with high relevance scores to the cloud. Further, considering that communication overhead often dominates computation overhead \cite{McMahan2016McMahan}, authors in \cite{huang2018loadaboostlossbased} increased computation on each device during a local training round by modifying the classic federated averaging algorithm in \cite{McMahan2016McMahan} as LoAdaBoost FedAvg. While in our work, thanks to the emergence of mobile edge computing (MEC) which migrates computing tasks from the network core to the network edge, we propose a hierarchical Federated Edge Learning (HFEL) framework. In HFEL, mobile devices first upload local models to proximate edge servers for partial model aggregation which can offer faster response rate and relieve core network congestion. 

Similarly, some existing literature also proposed hierarchical federated learning in MEC such as \cite{Liu2019EdgeAssisted} which presented a faster convergence speed than the FedAvg algorithm. Although a basic architecture about hierarchical federated learning has been built in \cite{Liu2019EdgeAssisted}, the heterogeneity of mobile device involved in FL is not considered. When large-scale devices with different dataset qualities, computation capacities and battery states participate in FL, resource allocation needs to be optimized to achieve cost efficient training.

There have been several existing research on the resource allocation optimization of mobile devices for different efficiency maximization objectives in edge-assisted FL \cite{Federated2018Yu,nishio2018client,yoshida2019hybridfl,ren2019accelerating,zeng2019energyefficient,dinh2019federated,vu2019cellfree}. Yu et al. worked on federated learning based proactive content caching (FPCC) \cite{Federated2018Yu}. While Nishio et al. proposed an FL protocal called FedCS to maximize the participating number of devices with a predefined deadline based on their wireless channel states and computing capacities \cite{nishio2018client}. Further, the authors extended their study of FedCS to \cite{yoshida2019hybridfl} in which data distribution differences are considered and solved by constructing independent identically distributed (IID) dataset. In \cite{ren2019accelerating}, the authors aimed at accelerating training process via optimizing batchsize selection and communication resource allocation in a federated edge learning (FEEL) framework. \cite{zeng2019energyefficient} explored energy-efficient radio resource management in FL and proposed energy-efficient strategies for bandwidth allocation and edge association. Dinh et al. worked on a resource allocation problem that captures the trade-off between convergence time and energy cost in FL \cite{dinh2019federated}. While in \cite{vu2019cellfree}, local accuracy, transmit power, data rate and devices' computing capacities were jointly optimized for FL training time minimization.

In our HFEL framework, we target at solving computation and bandwidth resource allocation of each device for training cost minimization in terms of energy and delay. Furthermore, edge association is optimized for each edge server under the scenario where more than one edge server is involved in HFEL and each device is able to communicate with multiple edge servers. While the literature \cite{ren2019accelerating,zeng2019energyefficient,dinh2019federated,vu2019cellfree} take only one edge server into account for resource allocation. Along a different line, we work on training cost minimization in terms of energy and delay by considering 1) joint computation and bandwidth resource allocation for each device and 2) edge association for each edge server.

\section{conclusion}
\label{conclusion}

Federated Learning (FL) has been proposed as an appealing approach to handle data security issue of mobile devices compared to conventional machine learning at the remote cloud with raw data. To enable great potentials in low-latency and energy-efficient FL, we introduce hierarchical Federated Edge Learning (HFEL) framework in which model aggregation is partially migrated to edge servers from the cloud. Furthermore, a joint computation and communication resource scheduling model under HFEL framework is formulated to achieve global cost minimization. Yet proving the minimization problem owns extremely high time complexity, we devise an efficient resource scheduling algorithm which can be decomposed into two subproblems: resource allocation given a scheduled set of devices for each edge server and edge association for all the edge servers. Through cost reducing iterations of solving resource allocation and edge association, our proposed HFEL algorithm terminates to a stable system point where it fulfills substantial performance gain in cost reduction compared with the benchmarks.

Eventually, compared to conventional federated learning without edge servers as intermediaries \cite{McMahan2016McMahan}, the HFEL framework accomplishes higher global and test accuracies and lower training loss as our simulation results show.

\bibliographystyle{IEEEtran}
\bibliography{ref}




%

\end{document}